\theoremstyle{plain}
\newtheorem{thm}{Theorem}[section]
\newtheorem{prop}[thm]{Proposition}
\theoremstyle{definition}
\newtheorem{defn}{Definition}
\theoremstyle{remark}
\DeclareMathOperator{\vol}{\textnormal{Vol}}
\begin{document}

%

%
\runningauthor{G. L. Marchetti, V. Polianskii, A. Varava, F. T. Pokorny, D. Kragic}

\twocolumn[

\aistatstitle{An Efficient and Continuous Voronoi Density Estimator}

\aistatsauthor{Giovanni Luca Marchetti \And Vladislav Polianskii \And  Anastasiia Varava}

\aistatsauthor{Florian T. Pokorny \And Danica Kragic}

\aistatsaddress{ School of Electrical Engineering and Computer Science, KTH Royal Institute of Technology\\
    Stockholm, Sweden  } ]

\begin{abstract}
We introduce a non-parametric density estimator deemed Radial Voronoi Density Estimator (RVDE). RVDE is grounded in the geometry of Voronoi tessellations and as such benefits from local geometric adaptiveness and broad convergence properties. Due to its radial definition RVDE is continuous and computable in linear time with respect to the dataset size. This amends for the main shortcomings of previously studied VDEs, which are highly discontinuous and computationally expensive. We provide a theoretical study of the modes of RVDE as well as an empirical investigation of its performance on high-dimensional data. Results show that RVDE outperforms other non-parametric density estimators, including recently introduced VDEs. 
\end{abstract}

\section{INTRODUCTION}\label{sec:intro}
The problem of estimating a Probability Density Function (PDF) from a finite set of samples lies at the heart of statistics and arises in several practical scenarios (\cite{pointpatterns, scottdensity}). Among density estimators, the non-parametric ones aim to infer a PDF through a closed formula. Differently from parametric methods, they do not require optimization and ideally provide an estimated PDF which is simple, interpretable and computationally efficient. Two traditional examples of non-parametric density estimators are the Kernel Density Estimator (KDE; \cite{kdebook}, \cite{rosenblattkde}) and histograms (\cite{histograms}, \cite{pearsonhistograms}). KDE consists of a mixture of local copies of a kernel around each datapoint while histograms partition the ambient space into local cells (`bins') where the estimated PDF is constant. 
\begin{center}
\begin{figure}[t]
 \centering
 \includegraphics[width=.95\linewidth]{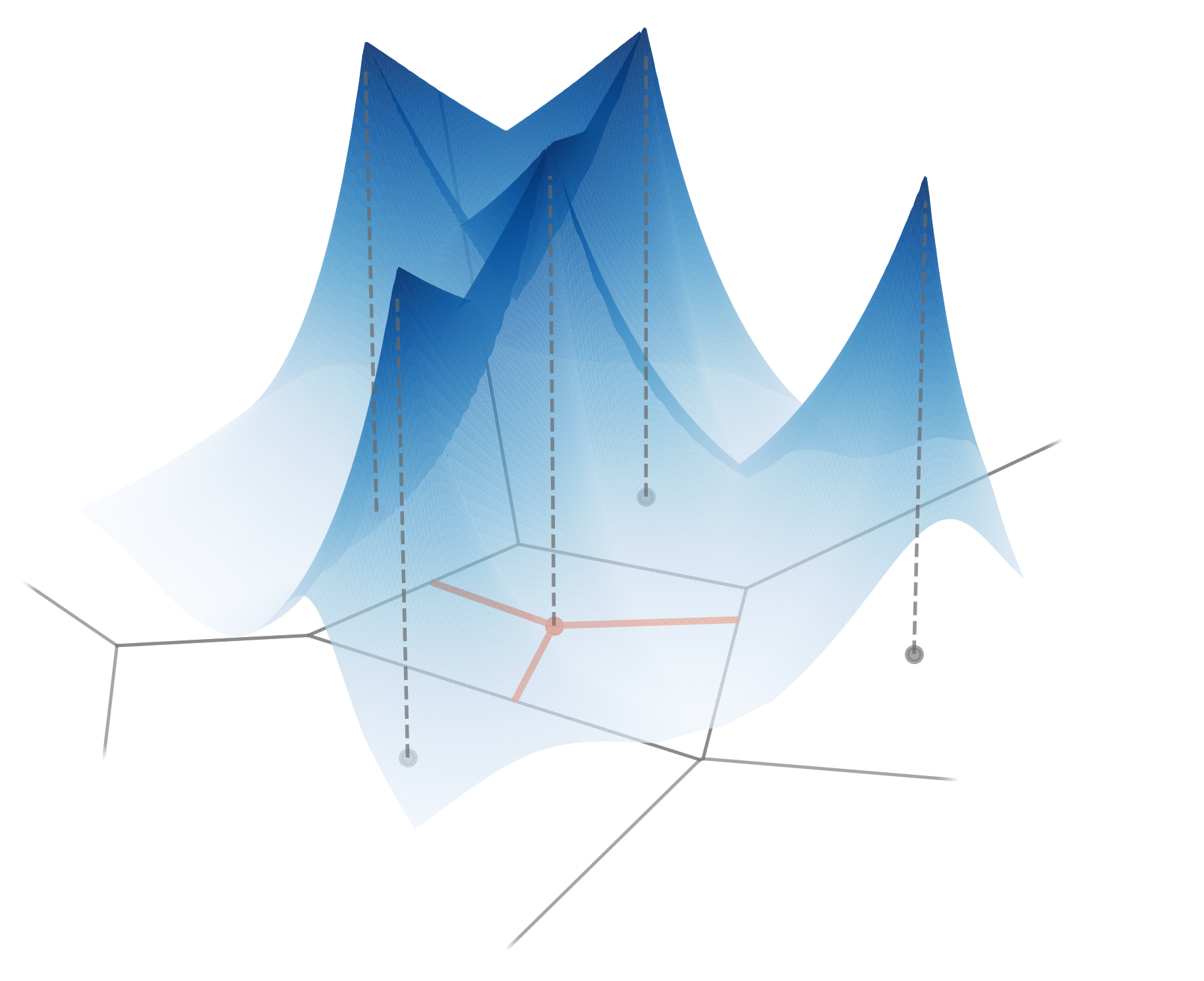} 
\caption{An example of a density estimated via RVDE. The Voronoi tessellation is depicted in solid gray. The estimated density is defined by the property that its conical integral over the rays originating from the datapoints (orange) is constant.}
\label{firstpage}
\end{figure}
\end{center}
Both histograms and KDE suffer from bias due to the prior choice of a local geometric structure i.e., the bins and the kernel respectively. This bias gets exacerbated in high-dimensional ambient spaces. The reason is that datasets grow exponentially in terms of geometric complexity, making a fixed simple geometry unsuitable for estimating high-dimensional densities. This has led to the introduction of the \emph{Voronoi Density Estimator} (VDE; \cite{ord}). VDE relies on the geometric adaptiveness of Voronoi cells, which are convex polytopes defined locally by the data (\cite{voronoibook}). The PDF estimated by VDE is constant on such cells, thus behaving as an adaptive version of histograms. Due to its local geometric properties, VDE possesses convergence guarantees to the ground-truth PDF which are more general than the ones of KDE. 

The geometric benefits of VDE, however, come with a number of shortcomings. First, the Voronoi cells and in particular their volumes are computationally expensive to compute in high dimensions. Although this has been recently attenuated by proposing Monte Carlo approximations (\cite{pol22vor}), VDE falls behind methods such as KDE in terms of computational complexity. Second, VDE (together with its generalized version from \cite{pol22vor} deemed CVDE) is highly discontinuous on the boundaries of Voronoi cells. The estimated PDF consequently suffers from large variance and instability with respect to the dataset. This is again in contrast to KDE, which is continuous in its ambient space.  

In this work, we propose a novel non-parametric density estimator deemed \emph{Radial Voronoi Density Estimator} (RVDE) which addresses the above challenges. Similarly to VDE, RVDE integrates to a constant on Voronoi cells and thus shares its local geometric advantages and convergence properties. In contrast to VDE, RVDE is continuous and computable in linear time with respect to the dataset size. The central idea behind RVDE is to define the PDF radially from the datapoints so that the (conic) integral over the ray cast in the corresponding Voronoi cell is constant (see Figure \ref{firstpage}). This is achieved via a `radial bandwidth' which is defined implicitly by an integral equation. Intuitively, the radial approach reduces the high-dimensional geometric challenge of defining a Voronoi-based estimator to a one-dimensional problem. This avoids the expensive volume computations of the original VDE and guarantees continuity because of the fundamental properties of Voronoi tessellations. Another important aspect of RVDE is its geometric distribution of modes. We show that the modes either coincide with the datapoints or lie along the edges of the Gabriel graph (\cite{gabriel1969new}) depending on a hyperparameter analogous to the bandwidth in KDE. 

We compare RVDE with CVDE, KDE and the adaptive version of the latter in a series of experiments. RVDE outperforms the baselines in terms of the quality of the estimated density on a variety of datasets. Moreover, it runs significantly faster and with lower sampling variance compared to CVDE. This empirically confirms that the geometric and continuity properties of RVDE translate into benefits for the estimated density in a computationally efficient manner. We provide an implementation of RVDE (together with baselines and experiments) in $C\texttt{++}$ at a publicly available repository \footnote{\url{https://github.com/giovanni-marchetti/rvde}}. The code is parallelized via the OpenCL framework and comes with a Python interface. In summary our contributions include: 
\begin{itemize}
    \item A novel density estimator (RVDE) based on the geometry of Voronoi tessellations which is continuous and computationally efficient. 
    \item A complete study of the modes of RVDE and their geometric distribution. 
    \item An empirical investigation comparing RVDE to KDE (together with its adaptive version) and previously studied VDEs. 
\end{itemize}

\section{RELATED WORK}\label{relwork}

\subsection{Non-parametric Density Estimation} 

Non-parametric methods for density estimation trace back to the introduction of histograms (\cite{pearsonhistograms}). Histograms have been extended by considering bin geometries beyond the canonical rectangular one, for example triangular (\cite{scottrecrangles}) and hexagonal (\cite{hexagons}) geometries. Another popular density estimator is KDE, first discussed by \cite{rosenblattkde} and \cite{parzenkde}. The estimated density is a mixture of copies of a priorly chosen distribution (`kernel') centered at the datapoints. KDE has been extended to the multivariate case (\cite{izemanmulti, silvermanmulti}) and has seen improvements such as bandwidth selection methods (\cite{band1, band2}) and algorithms for adaptive bandwidths (\cite{wang2007bandwidth, van2017variable}). Applications of KDE include estimation of traffic incidents (\cite{kdetraffic}), of archaeological data (\cite{kdearcheo}) and of wind speed (\cite{kdewind}) to name a few. As discussed in Section \ref{sec:intro}, both KDE and histograms suffer from lack of geometric adaptiveness due to the choice of prior local geometries.

Another class of non-parametric methods are the orthogonal density estimators (\cite{vannucci1995nonparametric, masry1997multivariate}). Those consist of choosing a discretized orthonormal basis of functions and computing the coefficients of the ground-truth density via Monte-Carlo integration over the dataset. When the basis is the Fourier one, the estimator is referred to as `wavelet estimator'. The core drawback is that orthogonal density estimators do not scale efficiently to higher dimensions. When considering canonical tensor product bases the complexity grows exponentially w.r.t. the dimensionality (\cite{walter1995estimation}), making the estimator unfeasible to compute.    

\subsection{Voronoi Density Estimators}
The first Voronoi Density Estimator (VDE) has been pioneered by \cite{ord}. The estimated density relies on Voronoi tessellations in order to achieve local geometric adaptiveness. This is the main advantage of VDE over methods such as KDE. The original VDE has seen applications to real-world densities such as neurons in the brain (\cite{voronoineuronal}), photons (\cite{voronoiphotons}) and stars in a galaxy (\cite{voronoiastronomy}). However, the method is not immediately extendable to high-dimensional spaces because of unfeasible computational complexity of volumes and abundance of unbounded Voronoi cells. This has been only recently amended by \cite{pol22vor} by introducing approximate numerical algorithms and by shaping of the density via a kernel. In the present work, we aim to design an alternative version of the original VDE which is continuous and does not rely on volume computations. The resulting estimator is thus stable and computationally efficient while still benefiting from the geometry of Voronoi tessellations.

\section{BACKGROUND}\label{sec:vde}
In this section we recall the class of non-parametric density estimators which we will be interested in throughout the present work. To this end, let $P \subseteq \mathbb{R}^n$ be a finite set and consider the following  central notion from computational geometry.

\begin{defn}
The \emph{Voronoi cell}\footnote{Sometimes referred to as \emph{Dirichlet cell}.} of $p \in P$ is: 
\begin{equation}
C(p) = \{ x\in \mathbb{R}^n \ | \ \forall q \in P \  d(x,q) \geq d(x,p) \}.
\end{equation}
\end{defn}
The Voronoi cells are convex polytopes that intersect at the boundary and cover the ambient space $\mathbb{R}^n$. The collection $\{C(p) \}_{p \in P}$ is referred to as  \emph{Voronoi tessellation} generated by $P$. Note that although the Voronoi tessellations are defined in an arbitrary metric space, the resulting cells might be non-convex for distances different from the Euclidean one. Since convexity will be crucial for the following constructions, we stick to the Euclidean metric for the rest of the work.  

We call \emph{density estimator} any mapping associating a probability density function $f_P \in L^1(\mathbb{R}^n)$ to a finite set $P \subseteq \mathbb{R}^n$. The following class of density estimators generalizes the original one by \cite{ord}. 
\begin{defn}
 A \emph{Voronoi Density Estimator} (VDE) is a density estimator $P \mapsto f_P$ such that for each $p\in P$: 
 \begin{equation}\label{generalvde}
 \int_{C(p)} f_P(x) \  \textnormal{d}x = \frac{1}{|P|}. 
 \end{equation}
 \end{defn}
  VDEs stand out among density estimators for their geometric properties. This is because the Voronoi cells are arbitrary polytopes that are adapted to the local geometry of data. For VDEs all the Voronoi cells have the same estimated probability, making such estimators locally adaptive from a geometric perspective. This is reflected, for example, by the general convergence properties of VDEs. The following is the main theoretical result from \cite{pol22vor}.

  \begin{thm}\label{convergence}
Let $P \mapsto f_P$ be a VDE and suppose that $P$ is sampled from a probability density $\rho \in L^1(\mathbb{R}^n)$ with support in the whole $\mathbb{R}^n$. For $P$ of cardinality $m$ consider the probability measure $\mathbb{P}_m = f_P \textnormal{d}x$ which is random in $P$.  Then the sequence $\mathbb{P}_m$ converges to $\mathbb{P} = \rho \textnormal{d}x$ in distribution w.r.t. $x$ and in probability w.r.t. $P$.  Namely, for any measurable set $E \subseteq \mathbb{R}^n$ the sequence of random variables $\mathbb{P}_m(E)$ converges in probability to the constant $\mathbb{P}(E)$.
\end{thm}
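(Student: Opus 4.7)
The plan is to leverage the defining identity $\int_{C(p)} f_P\,\textnormal{d}x = 1/m$ to reduce the statement to a counting argument over the random Voronoi cells. Let $N^{\textnormal{in}}_m(E) = |\{p\in P : C(p)\subseteq E\}|$ and $N^{\textnormal{bd}}_m(E) = |\{p\in P : C(p)\cap E\neq\emptyset \text{ and } C(p)\cap E^c\neq\emptyset\}|$. Non-negativity of $f_P$ combined with the VDE property gives, per cell, $\int_{E\cap C(p)} f_P = 1/m$ when $C(p)\subseteq E$, $0$ when $C(p)\subseteq E^c$, and a value in $[0,1/m]$ otherwise. Summing over $p$,
$$
\frac{N^{\textnormal{in}}_m(E)}{m}\ \leq\ \mathbb{P}_m(E)\ \leq\ \frac{N^{\textnormal{in}}_m(E)+N^{\textnormal{bd}}_m(E)}{m}.
$$

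Since $p\in C(p)$, the inclusions $\{p : C(p)\subseteq E\}\subseteq P\cap E\subseteq \{p : C(p)\cap E\neq \emptyset\}$ yield $N^{\textnormal{in}}_m(E)\leq |P\cap E|\leq N^{\textnormal{in}}_m(E)+N^{\textnormal{bd}}_m(E)$. The strong law of large numbers applied to the i.i.d.\ Bernoulli variables $\mathbf{1}_{X_i\in E}$ gives $|P\cap E|/m\to \mathbb{P}(E)$ almost surely. Hence both the lower and upper bounds above converge in probability to $\mathbb{P}(E)$ as soon as one establishes the boundary estimate $N^{\textnormal{bd}}_m(E)/m\to 0$ in probability.

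For the boundary estimate I would first reduce to sets $E$ with $\mathbb{P}(\partial E)=0$ by approximating $\mathbf{1}_E$ in $L^1(\rho)$ from inside and outside by indicators of finite unions of open rectangles (which have negligible topological boundary), followed by a $3\varepsilon$-argument to transfer the conclusion to a general measurable $E$. For such "regular" $E$, invoke the classical fact that under i.i.d.\ sampling from a density $\rho$ with full support, the nearest-neighbour distance $\delta_m(x)=d(x,P_m)$ tends to $0$ almost surely and, on any compact set, uniformly in probability. A cell contributing to $N^{\textnormal{bd}}_m(E)$ must contain some $x_0\in\partial E$ with $\|x_0-p\|\leq \delta_m(x_0)$, so its generator $p$ lies in an $\varepsilon$-tube around $\partial E$ whenever $\delta_m$ is controlled by $\varepsilon$ on the relevant compact. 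The LLN then bounds the number of such generators by $m\,\mathbb{P}(\textnormal{tube}_\varepsilon)+o(m)$, and $\mathbb{P}(\textnormal{tube}_\varepsilon)\to 0$ as $\varepsilon\to 0$ because $\mathbb{P}(\partial E)=0$.

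The main obstacle is precisely this boundary control: making rigorous the intuition "cells shrink, hence few straddle $\partial E$" without requiring uniform cell-diameter bounds (which fail in low-density tails and for unbounded cells). The workaround sketched above replaces a uniform diameter estimate by a joint use of a local nearest-neighbour bound near $\partial E$ and the LLN for the fraction of generators in a thin tube; the $L^1(\rho)$-approximation step reducing general measurable $E$ to sets with $\rho$-negligible boundary is standard but must be arranged so that the dependence of the error bounds on $E$ is benign.
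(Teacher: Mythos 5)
You should first be aware that the paper contains no proof of Theorem \ref{convergence} to compare against: it is stated as imported background (``the main theoretical result from'' the cited prior work on CVDE), and neither the main text nor the appendix proves it. So I can only assess your proposal on its own terms.

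Your core device is sound and is the natural way to exploit the defining property of a VDE: the sandwich $N^{\textnormal{in}}_m(E)/m \leq \mathbb{P}_m(E) \leq (N^{\textnormal{in}}_m(E)+N^{\textnormal{bd}}_m(E))/m$, its comparison with $|P\cap E|/m$, the LLN, and the nearest-neighbour/tube control of boundary cells together do prove the claim for sets $E$ with $\mathbb{P}(\partial E)=0$ (after localizing to a large ball so that the relevant portion of $\partial E$ is compact, which you gesture at). The genuine gap is the reduction from an arbitrary measurable $E$ to such sets. Approximating $\mathbf{1}_E$ ``from inside and outside by indicators of finite unions of open rectangles'' is impossible by inclusion for a general measurable set --- a set of positive measure with empty interior contains no rectangle --- so you are forced to approximate in symmetric difference; the $3\varepsilon$ argument then needs $\mathbb{P}_m(E\triangle A)$ to be small, which is exactly the setwise smallness of $\mathbb{P}_m$ on small-$\rho$-measure sets that you are trying to establish. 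No counting argument can supply this, because the VDE axiom only pins down the total mass of each cell and says nothing about how that mass is distributed within the cell. Concretely, fix a dense open set $U$ of finite Lebesgue measure with $\mathbb{P}(U)=\varepsilon<1$ and consider the density estimator that places each cell's mass $1/m$ uniformly on $C(p)\cap U$ (a nonempty open subset of the cell, hence of positive finite measure); this is a legitimate VDE under the paper's definitions, yet $\mathbb{P}_m(U)=1$ for every $m$. Hence the theorem in the stated generality (arbitrary VDE, arbitrary measurable $E$) cannot be proved at all, and your reduction step is precisely where any attempt must break. Your argument genuinely establishes the result for $\mathbb{P}$-continuity sets; extending it to all measurable $E$ requires additional regularity hypotheses on $f_P$ (such as those enjoyed by the concrete estimators CVDE and RVDE), not a repair of the approximation step.
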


In contrast, the convergence of other density estimators such as KDE requires the kernel bandwidth to vanish asymptotically (\cite{devroye1979l1}). The bandwidth vanishing is necessary in order to amend for the local geometric bias inherent in KDE as discussed in Section \ref{sec:intro}. 

The following canonical construction of a VDE deemed Compactified Voronoi Density Estimator (CVDE) is discussed by \cite{pol22vor}. Given an integrable kernel $K: \ \mathbb{R}^n \times \mathbb{R}^n \rightarrow \mathbb{R}_{>0}$ the estimated density is defined as 
\begin{equation}
 f_P(x) =   \frac{K(p,x)}{|P| \vol_p(C(p))}
\end{equation}
where $p$ is the closest point in $P$ to $x$ and $\vol_p(C(p))=  \int_{C(p)} K(p,y) \ \textnormal{d}y$. The latter volumes are approximated via Monte Carlo methods since they become unfeasible to compute exactly as dimensions grow. The resulting density inherits the same regularity as $K$ when restricted to each Voronoi cell but jumps  \emph{discontinuously} when crossing the boundary of Voronoi cells (see Figure \ref{fig:heats}). Motivated by this, the goal of the present work is to introduce a continuous and efficient VDE.

\section{METHOD}
 \subsection{Radial Voronoi Density Estimator}\label{sec:contvdedef}
In this section we outline a general way of constructing a VDE with continuous density function. Our central idea is to define the latter \emph{radially} w.r.t. the datapoints $p \in P$. We start by rephrasing the integral over a Voronoi cell (Equation \ref{generalvde}) in spherical coordinates:
\begin{align}\label{sphericalint}
\begin{split}
   & \int_{C(p)} f_P(x) \ \textnormal{d}x = \\
   =& \int_{\mathbb{S}^{n-1}} \underbrace{\int_{0}^{l(p + \sigma)} t^{n-1} f_P(p + t\sigma) \ \textnormal{d} t }_{\textnormal{Conical Integral}}\textnormal{d} \sigma.
   \end{split}
\end{align}
Here $\mathbb{S}^{n-1} \subseteq \mathbb{R}^n$ denotes the unit sphere and $l(x) \in [0, +\infty]$ denotes the length of the segment contained in $C(p)$ of the ray cast from $p$ passing through $x$ i.e., 
\begin{equation}\
    l(x) = \sup \left\{ t \geq 0 \ | \ p + t\frac{x-p}{d(x,p)} \in C(p) \right\}.
\end{equation}  
\begin{center}
\begin{figure}
        \includegraphics[width=.65\linewidth]{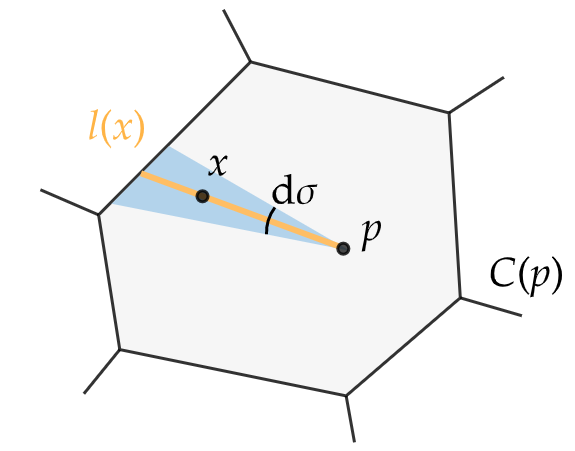} 
        \centering
\caption{Illustration of the quantity $l(x)$ involved in the definition of RVDE (Definition \ref{def:rvde}). The estimated density integrates to a constant over all the infinitesimal cones (blue) in $C(p)$ originating from $p$. }\label{figradial}
\end{figure}
\end{center}
We refer to Figure \ref{figradial} for a visual illustration. Note that $l(x)$ is defined for $x \not = p$ and is continuous in its domain since $l(x) =  d(x,p) = d(x,q)$ for $x \in C(p) \cap C(q)$. 

We aim to solve Equation \ref{generalvde} by forcing the conical integral in Equation \ref{sphericalint} to be \emph{constant}. To this end, we fix a continuous and strictly decreasing function $K : \ \mathbb{R}_{>A} \rightarrow \mathbb{R}_{\geq 0}$ (a `kernel') defined on a half-line $\mathbb{R}_{> A}$, $A < 0$, with the property that $t^{n-1} K(t)$ is integrable on $\mathbb{R}_{>0}$. By an ansatz we look for a density in the form 
\begin{equation}\label{contvde}
f_P(x) = \frac{K\left(\beta(l(x)) d(x,p)\right)}{ \alpha |P| \textnormal{Vol}(\mathbb{S}^{n-1})}
\end{equation}
where $\alpha > 0$ is a hyperparameter and $\beta : \ \mathbb{R}_{>0} \rightarrow \mathbb{R}$ is a function that we would like to determine. The latter intuitively represents a \emph{radial bandwidth}. The density $f_P$ is continuous since the discontinuity of $l$ at $x =p$ is amended by the vanishing of $d(x,p)$. Equation \ref{generalvde} is satisfied if for every $l >0$:
\begin{equation}\label{eqbeta}
    \int_0^l t^{n-1}K(\beta(l)t) \ \textnormal{d}t = \alpha.
 \end{equation}
Since $K$ is strictly decreasing, the above expression always has a unique solution $\beta(l) > \frac{A}{l}$ assuming that $t^{n-1}K(t)$ is not integrable around $A$. Such a guaranteed solution can be computed via any root-finding algorithm and is continuous w.r.t. $l$. We provide an analysis of the function $\beta$ and a discussion of the Newton-Raphson method for its computation in Section \ref{sec:study}. 

The derivations above bring us to the following definition. 
\begin{defn}\label{def:rvde}
Fix an $\alpha > 0$ and a continuous function $K : \ \mathbb{R}_{>A} \rightarrow \mathbb{R}_{> 0}$ with the domain bound $A < 0$. Assume the following: 
\begin{itemize}
    \item $K(0) = 1$,
    \item $K$ is strictly decreasing,
    \item $|t|^{n-1}K(t)$ is integrable around $+\infty$ but not integrable around $A$.
\end{itemize}
The \emph{Radial Voronoi Density Estimator} (RVDE) is the density estimator defined by Equation \ref{contvde} where $\beta$ is the function defined implicitly by Equation \ref{eqbeta}.
\end{defn}

The following two standard families of kernels $K$ satisfy the above requirements: 

\begin{equation}\label{eq:kernels}
\begin{tabular}{ccc}
\emph{Exponential} &  \hspace{.7cm}  & \emph{Rational}  \\
 $K(t) = e^{-t}$ & \hspace{.2cm} & $K(t) = \frac{1}{(t+1)^k}$
\end{tabular}
\end{equation}

where $k >  n$. The domain bounds are $A=-\infty$ and $A = -1$ respectively. When $n=1$ and $K$ is the exponential kernel, the function $\beta$ is closely related to the Lambert $W$ function (\cite{corless1996lambertw}) via the expression:
\begin{equation}
\beta(l) = \frac{1}{\alpha} + W\left( - \frac{l}{\alpha} e^{-\frac{l}{\alpha}}  \right).
\end{equation}
We provide an empirical comparison between the two kernels from Equation \ref{eq:kernels} in Section \ref{sec:kernelcomp}. 
    
The intuition behind the hyperparameter $\alpha$ is that it controls the trade-off between the amount of density concentrated around $P$ and away
from it (i.e., on the boundary of Voronoi cells). Indeed as $\alpha \to 0^+$ RVDE tends (in distribution) to the discrete empirical measure over $P$ while as $\alpha \to +\infty$ it tends to a measure concentrated on the boundary of Voronoi cells. This can be deduced from Equation \ref{eqbeta} since $\beta(l)$ tends to $+\infty$ and to $A/l$ respectively and thus Equation \ref{contvde} tends to $0$ for $d(x,p) \not = 0, l(x)$. This intuition around $\alpha$ will be corroborated by Proposition \ref{modesprop}, where we study how it controls the distribution of modes of RVDE and consequently propose a heuristic selection procedure. 

\begin{figure*}[th!]
\captionsetup[subfigure]{justification=centering}
    \centering
    \hspace{1em}
    \begin{subfigure}[b]{.23\linewidth}
        \centering
        \includegraphics[width=\linewidth]{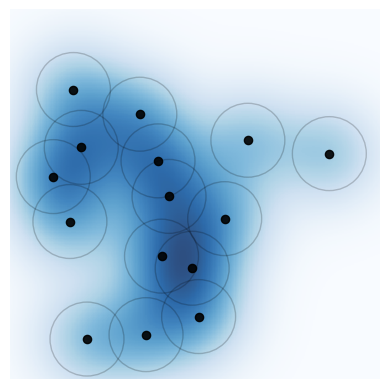}
        \subcaption*{KDE \\ (\cite{rosenblattkde})}
    \end{subfigure}
    \hfill
    \begin{subfigure}[b]{.23\linewidth}
        \centering
        \includegraphics[width=\linewidth]{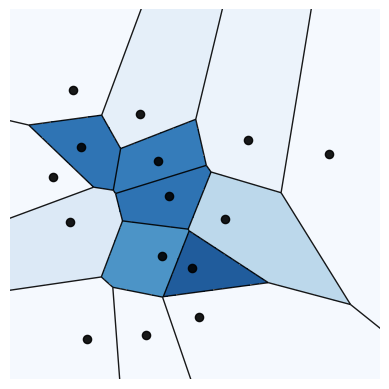}
        \subcaption*{Original VDE \\ (\cite{ord})}
    \end{subfigure}
    \hfill
    \begin{subfigure}[b]{.23\linewidth}
        \centering
 \includegraphics[width=\linewidth]{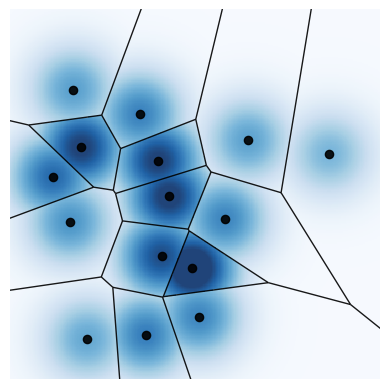}
        \subcaption*{CVDE \\ (\cite{pol22vor})}
    \end{subfigure}
    \hfill
    \begin{subfigure}[b]{.23\linewidth}
        \centering
        \includegraphics[width=\linewidth]{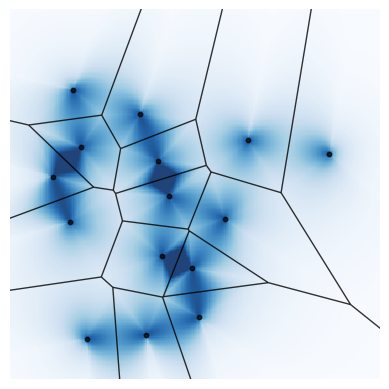}
        \subcaption*{RVDE \\ (Ours)}
    \end{subfigure}
    \hspace{1em}

    \caption{From left to right: heatmaps of KDE, of the two VDEs from the literature and of our RVDE.}
    \label{fig:heats}
\end{figure*}

\subsection{Computational Complexity and Sampling}\label{sec:compute}
We now discuss the computational cost of evaluating RVDE at a point $x \in \mathbb{R}^n$. To begin with, the closest $p \in P$ to $x$ can be found in logarithmic time w.r.t. $|P|$ by organizing $P$ in an efficient data structure for nearest neighbor lookups such as a $k$-$d$ tree. Then $l(x)$ can be computed in linear time via the following closed expression (\cite{pol22vor}): 
\begin{equation}\label{eq:lqz}
    l(x) = \min_{q \not = p, \ l^q(x) \geq 0} l^q(x)
\end{equation}
where 
\begin{equation}
     l^q(x) = \frac{ d(q,p)^2 }{2 \left\langle \frac{x-p}{d(x,p)}, q-p \right\rangle }.
\end{equation}
The computational cost of evaluating $f_P(x)$ is thus linear w.r.t. $|P|$. The remaining compute essentially reduces to solving Equation \ref{eqbeta}, which depends on the integrator, the root-finder algorithm adopted and the desired precision. 

The formulation of RVDE enables a simple and efficient procedure for sampling from the estimated density. In order to sample, one first chooses a $p \in P$ uniformly since $f_P$ integrates to $\frac{1}{|P|}$ on each Voronoi cell (Equation \ref{generalvde}). Since $t^{n-1}f_P(p + t\sigma)$ integrates to a constant on the ray $r = \{ p + t\sigma \}_{t\geq 0} \cap C(p)$ for every $\sigma \in \mathbb{S}^{n-1}$, one then samples $\sigma$ uniformly from the sphere. Finally one samples $t$ from the one-dimensional density $t^{n-1}K(t)$ restricted to the interval $[0, l(p+\sigma)]$. The computational complexity of the latter step depends on the kernel as well as of the sampling method. The result of the sampling is $p+t\sigma$. Because of the computational cost of $l(p + \sigma)$, the sampling complexity of RVDE is linear w.r.t. $|P|$.


RVDE is more efficient than the VDE discussed by \cite{pol22vor} (see the end of Section \ref{sec:vde}). The latter relies on Monte Carlo integration for numerical approximation of volumes of Voronoi cells and has complexity $O(\Sigma|P|^2)$ where $\Sigma$ is the number of Monte Carlo samples. Compared to KDE, RVDE has the same computational complexity (for both evaluation and sampling) while retaining the geometric benefits of a VDE. 

\subsection{Study of $\beta$ and Modes}\label{sec:study}
In this section we discuss qualitative properties and computational aspects of the function $\beta$ defined implicitly by Equation \ref{eqbeta} and consequently characterize the modes of RVDE. We start by presenting an explicit expression of the Newton-Raphson iteration for the computation of $\beta(l)$. 

\begin{prop}\label{newtonmeth}
Fix $l>0$ and suppose $K \in C^1(\mathbb{R}_{>A})$ i.e., it is continuously differentiable. Then the iteration $\beta_{m+1}$ of the Newton-Raphson method for computing $\beta(l)$ by solving Equation \ref{eqbeta} takes form: 
\begin{flalign}
\beta_{m+1} =  &&
\end{flalign}
\begin{equation*}
=\beta_m + \frac{\beta_m}{n} \Biggl(1 - \frac{l^nK(\beta_m l) - n\alpha}{l^nK(\beta_m l)  - n \int_0^l t^{n-1}K(\beta_mt) \ \textnormal{d}t}  \Biggl) . 
\end{equation*}
Moreover, if $K$ is convex then the Newton-Raphson method converges for any initial value $\beta_0$ i.e., $\lim_{m \to \infty}\beta_m = \beta(l)$. 
\end{prop}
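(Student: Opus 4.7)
The iteration formula is a routine derivation. I would set $F(\beta) := \int_0^l t^{n-1}K(\beta t) \ \textnormal{d}t - \alpha$, so that $\beta(l)$ is the unique root of $F$ on the admissible domain $(A/l, +\infty)$. To get a workable expression for $F'$, I would change variables via $u = \beta t$, rewriting
\[
J(\beta) \;:=\; \int_0^l t^{n-1}K(\beta t) \ \textnormal{d}t \;=\; \beta^{-n}\int_0^{\beta l} u^{n-1}K(u) \ \textnormal{d}u ,
\]
and then differentiate the right-hand side directly. This yields $\beta\, F'(\beta) = l^n K(\beta l) - n\, J(\beta)$. Substituting into $\beta_{m+1} = \beta_m - F(\beta_m)/F'(\beta_m)$ produces
\[
\beta_{m+1} \;=\; \beta_m - \frac{\beta_m\bigl[J(\beta_m) - \alpha\bigr]}{l^n K(\beta_m l) - n\, J(\beta_m)} ,
\]
and a one-line rearrangement using the identity
\[
n\bigl(J(\beta_m) - \alpha\bigr) \;=\; \bigl(l^n K(\beta_m l) - n\alpha\bigr) \;-\; \bigl(l^n K(\beta_m l) - n\, J(\beta_m)\bigr)
\]
recovers exactly the displayed form.

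For the convergence claim I would first verify that $F$ is strictly decreasing and convex on its domain. Differentiating under the integral in the original form gives $J'(\beta) = \int_0^l t^n K'(\beta t)\ \textnormal{d}t < 0$ since $K$ is strictly decreasing, and $J''(\beta) = \int_0^l t^{n+1}K''(\beta t)\ \textnormal{d}t \geq 0$ since $K$ is convex. The rest is the standard geometric picture of Newton on a convex function: the graph of $F$ sits above each of its tangents, so the tangent at $\beta_m$ reaches zero at $\beta_{m+1}$ with $F(\beta_{m+1}) \geq 0$, and strict monotonicity of $F$ converts this to $\beta_{m+1} \leq \beta^\star := \beta(l)$. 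For any $\beta_m \leq \beta^\star$ one has $F(\beta_m) \geq 0$ and $F'(\beta_m) < 0$, forcing $\beta_{m+1} \geq \beta_m$. Hence from the first iterate onward the sequence $\{\beta_m\}$ is nondecreasing and bounded above by $\beta^\star$; it therefore converges, and continuity of $F$ pins the limit to the unique root $\beta^\star$.

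The delicate point I foresee is the admissible-domain constraint $\beta > A/l$, which only bites when $A > -\infty$ (e.g.\ the rational kernel): a single Newton step from a very large $\beta_0$ can in principle overshoot and land below $A/l$, after which $F$ is undefined. Strictly speaking the argument therefore requires either restricting the range of admissible $\beta_0$ to guarantee $\beta_1 > A/l$, or observing that for the canonical exponential kernel $A = -\infty$ and the issue is vacuous. Apart from this caveat, the proof is purely algebra combined with the textbook monotone convergence argument for Newton's method on a strictly monotone convex function.
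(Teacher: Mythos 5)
Your proof is correct and follows essentially the same route as the paper's: the same $F(\beta)=\int_0^l t^{n-1}K(\beta t)\,\textnormal{d}t-\alpha$, the same identity $\beta F'(\beta)=l^nK(\beta l)-n\int_0^l t^{n-1}K(\beta t)\,\textnormal{d}t$ (obtained there by integration by parts rather than your change of variables), and convergence via convexity of $F$, for which the paper simply cites the standard result while you spell out the monotone argument. Two minor notes: convexity of $F$ follows directly from convexity of $\beta\mapsto K(\beta t)$ for each fixed $t\geq 0$, so you need not invoke $K''$ (which is not assumed to exist under the $C^1$ hypothesis); and the overshooting caveat you raise for kernels with finite $A$ is a genuine subtlety that the paper's proof does not address either.
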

We refer to the Appendix for a proof. Note that the convexity assumption is satisfied by both the kernels from Equation \ref{eq:kernels}. Proposition \ref{newtonmeth} enables to compute $\beta(l)$ and together with Section \ref{sec:compute} provides all the algorithmic details for implementing RVDE. 

Next, we outline a qualitative study of the function $l \mapsto \beta(l)$. 
\begin{prop}\label{diffeq}
The function $\beta: \ \mathbb{R}_{>0} \rightarrow \mathbb{R}$ is increasing, has a zero at $l= (n \alpha)^{\frac{1}{n}}$ and has an horizontal asymptote: 
\begin{equation}
\lim_{l \to + \infty} \beta(l) = \left( \frac{1}{\alpha} \int_0^\infty t^{n-1}K(t) \ \textnormal{d}t \right)^{\frac{1}{n}}. 
\end{equation}
Moreover if $K \in C^1(\mathbb{R}_{>A})$ then $\beta \in C^1(\mathbb{R}_{>0})$ and it satisfies the differential equation:
\begin{equation}
 \left( l- \frac{n \alpha}{l^{n-1}K(\beta(l)l)} \right) \frac{\textnormal{d}\beta}{\textnormal{d}l}(l) = - \beta(l). 
\end{equation}
\end{prop}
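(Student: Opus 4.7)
The plan is to prove the four assertions in the order: the location of the zero, monotonicity, the horizontal asymptote, and finally $C^1$ regularity together with the differential equation. The first three claims do not require differentiability of $K$ and can be established from the monotonicity of $K$ and the integrability assumption alone; the fourth is where the work lies.

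\textbf{Zero.} I would substitute $\beta = 0$ into Equation \ref{eqbeta} and use $K(0) = 1$ to reduce it to $\int_0^l t^{n-1} \, \textnormal{d}t = l^n/n = \alpha$, forcing $l = (n\alpha)^{1/n}$.

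\textbf{Monotonicity.} A direct contradiction argument avoids the need for implicit differentiation. If $l_1 < l_2$ but $\beta(l_1) \geq \beta(l_2)$, then strict monotonicity of $K$ gives $K(\beta(l_1) t) \leq K(\beta(l_2) t)$ for $t > 0$, hence
\begin{equation*}
\alpha = \int_0^{l_1} t^{n-1} K(\beta(l_1) t) \, \textnormal{d}t \leq \int_0^{l_1} t^{n-1} K(\beta(l_2) t) \, \textnormal{d}t < \int_0^{l_2} t^{n-1} K(\beta(l_2) t) \, \textnormal{d}t = \alpha,
\end{equation*}
a contradiction.

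\textbf{Asymptote.} The key step is the substitution $u = \beta(l) t$ in Equation \ref{eqbeta}, which rewrites the defining equation as
\begin{equation*}
\beta(l)^n = \frac{1}{\alpha} \int_0^{\beta(l) l} u^{n-1} K(u) \, \textnormal{d}u.
\end{equation*}
Since $\beta$ is increasing and the right-hand side is bounded by $\frac{1}{\alpha} \int_0^\infty u^{n-1} K(u) \, \textnormal{d}u$, which is finite by the integrability of $|t|^{n-1} K(t)$ near $+\infty$, the limit $L = \lim_{l \to \infty} \beta(l)$ exists and is finite. Because $\beta$ is positive beyond its zero, $\beta(l) l \to \infty$ and passing to the limit in the displayed identity yields the claimed value.

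\textbf{ODE and $C^1$ regularity.} Under $K \in C^1$, I would apply the implicit function theorem to $F(l, \beta) = \int_0^l t^{n-1} K(\beta t) \, \textnormal{d}t - \alpha$, observing that $\partial F/\partial \beta = \int_0^l t^n K'(\beta t) \, \textnormal{d}t$ is strictly negative by $K' < 0$. This gives $\beta \in C^1(\mathbb{R}_{>0})$ and the relation
\begin{equation*}
l^{n-1} K(\beta(l) l) + \beta'(l) \int_0^l t^n K'(\beta(l) t) \, \textnormal{d}t = 0.
\end{equation*}
The main obstacle is to recast the remaining integral in closed form so that the equation matches the statement. My plan is the substitution $u = \beta(l) t$ followed by integration by parts, combined with the rewritten form of Equation \ref{eqbeta} used in the asymptote step. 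This should yield
\begin{equation*}
\int_0^l t^n K'(\beta(l) t) \, \textnormal{d}t = \frac{l^n K(\beta(l) l) - n\alpha}{\beta(l)},
\end{equation*}
after which dividing through by $l^{n-1} K(\beta(l) l)$ and rearranging gives the stated differential equation.
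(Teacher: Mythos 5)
Your proposal is correct and follows essentially the same route as the paper's proof: the same substitution $\beta=0$ for the zero, the same change of variables $u=\beta(l)t$ for the asymptote, and the same implicit differentiation plus integration by parts (yielding $\int_0^l t^n K'(\beta t)\,\textnormal{d}t = (l^nK(\beta l)-n\alpha)/\beta$) for the differential equation. You are somewhat more careful in a few places --- the explicit contradiction argument for monotonicity, establishing existence of the limit before evaluating it, and invoking the implicit function theorem to justify $\beta \in C^1$ --- all of which the paper's proof leaves implicit.
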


We refer to the Appendix for a proof. As discussed in Section \ref{sec:contvdedef}, $\beta$ generalizes the Lambert $W$ function. The properties and the differential equation from Proposition \ref{diffeq} generalize their well-known instances for the $W$ function (\cite{corless1996lambertw}).

We now focus on the study of modes. Our goal is to describe the modes of RVDE completely. This is an advantage over density estimators such as KDE, where the modes are challenging to describe and to compute approximately (\cite{lee2021finding, comaniciu2002mean}). Denote by $\varepsilon = (n \alpha)^{\frac{1}{n}}$ the zero of $\beta$. Proposition \ref{diffeq} implies that for $x\in \mathbb{R}^n$, the density $f_P$ decreases radially w.r.t. $p$ in the direction of $x$ if $l(x) >  \varepsilon$ and increases otherwise. This leads to the following result. 
\begin{prop}\label{modesprop}
    The modes of $f_P$ are classified as follows:  
\begin{enumerate}[label=(\arabic*)]
    \item $p \in P$ if $d(p, q)> 2 \varepsilon$ for every Voronoi cell $C(q)$ adjacent to $C(p)$,
    \item $\frac{p + q}{2}$ for $p, q \in P$ if $\frac{p + q}{2} \in C(p) \cap C(q)$ and $d(p,q) < 2 \varepsilon$,
    \item all points belonging to the segment $[p,q]$ for $p, q \in P$ if $\frac{p + q}{2} \in C(p) \cap C(q)$ and $d(p,q) = 2\varepsilon$. 
\end{enumerate}
\end{prop}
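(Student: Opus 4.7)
The plan is to first establish a monotonicity property of the auxiliary function $g(l) := l\,\beta(l)$, and then use it to compare $f_P$ at each candidate mode against its neighbors via the radial formula from Equation \ref{contvde}.

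\textbf{Step 1 (key monotonicity lemma).} Substituting $t = lu$ in Equation \ref{eqbeta} rewrites it as $l^n \Psi(g(l)) = \alpha$, where $\Psi(y) := \int_0^1 u^{n-1} K(yu)\,\textnormal{d}u$. Differentiating under the integral, $\Psi'(y) = \int_0^1 u^n K'(yu)\,\textnormal{d}u < 0$ since $K$ is strictly decreasing, so $\Psi$ is strictly decreasing. Since $\alpha/l^n$ is also strictly decreasing in $l$, I conclude that $g$ is strictly increasing on $\mathbb{R}_{>0}$. Combined with Proposition \ref{diffeq} this gives $g(l) < 0$ for $l < \varepsilon$, $g(\varepsilon) = 0$, and $g(l) > 0$ for $l > \varepsilon$. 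Consequently, on any ray from $p$ with $l_\sigma := l(p+\sigma)$, the quantity $\beta(l_\sigma) t$ appearing in Equation \ref{contvde} is strictly monotone in $t \in [0, l_\sigma]$: decreasing if $l_\sigma > \varepsilon$ and increasing if $l_\sigma < \varepsilon$.

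\textbf{Step 2 (Case 1).} Suppose $d(p,q) > 2\varepsilon$ for every $C(q)$ adjacent to $C(p)$. Every face of $\partial C(p)$ is contained in the perpendicular bisector of such a pair $(p,q)$, hence at distance at least $d(p,q)/2 > \varepsilon$ from $p$. Therefore $l(p+\sigma) > \varepsilon$ uniformly in $\sigma \in \mathbb{S}^{n-1}$, so by Step 1 $f_P$ is strictly radially decreasing from $p$ in every direction. Since $p$ lies in the interior of $C(p)$, it is a strict local maximum of $f_P$.

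\textbf{Step 3 (Cases 2 and 3).} Suppose $m = (p+q)/2 \in C(p) \cap C(q)$ with $d(p,q) \leq 2\varepsilon$, and let $\sigma = (q-p)/d(p,q)$. Since $m$ lies on the shared face, for $x \in C(p)$ close to $m$ the ray from $p$ through $x$ first exits $C(p)$ through this same face, giving $l(x) = d(p,q)/(2\cos\theta_x) \geq d(p,q)/2$, where $\theta_x$ is the angle between $x-p$ and $\sigma$. With $t := d(x,p) \leq l(x)$, I would split on the sign of $\beta(l(x))$: if $\beta(l(x)) \geq 0$ then $\beta(l(x))t \geq 0 \geq g(d(p,q)/2)$; if $\beta(l(x)) < 0$ then $\beta(l(x))t \geq \beta(l(x))l(x) = g(l(x)) \geq g(d(p,q)/2)$ by Step 1. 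In either case $K$ (being decreasing) yields $f_P(x) \leq f_P(m)$, and the same holds by symmetry for $x \in C(q)$. When $d(p,q) < 2\varepsilon$ this gives Case 2. When $d(p,q) = 2\varepsilon$, one has $l(x) = \varepsilon$ and $\beta(l(x)) = 0$ everywhere along the segment $[p,q]$, so $f_P$ is constant there and the same bound shows every point of the segment is a local maximum, giving Case 3.

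\textbf{Main obstacle.} The crux is isolating the function $g(l) = l\beta(l)$ and proving its monotonicity via Step 1. Without this the comparison of $f_P(x)$ with $f_P(m)$ on boundary faces is not transparent, since the argument of $K$ depends simultaneously on the direction-dependent length $l(x)$ and the radial distance $d(x,p)$, which can only be decoupled through the inequality $g(l(x)) \geq g(d(p,q)/2)$ provided by the monotonicity of $g$.
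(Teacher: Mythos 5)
Your proof is correct and rests on the same key fact as the paper's: the sign of $\beta$ (zero at $\varepsilon$) governs the radial behaviour of $f_P$, and the crux is that $g(l)=l\beta(l)$ is increasing, which is exactly the lemma the paper isolates. Where you genuinely differ is in how that monotonicity is proved. The paper differentiates, using the ODE of Proposition \ref{diffeq} to get $\frac{\textnormal{d}(l\beta(l))}{\textnormal{d}l}=\beta(l)\frac{n\alpha}{n\alpha-l^nK(\beta(l)l)}\geq 0$, which requires $K\in C^1$. Your substitution $t=lu$, turning Equation \ref{eqbeta} into $l^n\Psi(g(l))=\alpha$ with $\Psi$ strictly decreasing, is more elementary and gives \emph{strict} monotonicity of $g$ with no differentiability assumption at all (you do not even need $\Psi'$: $K$ strictly decreasing already forces $\Psi$ strictly decreasing). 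Your Step 3 is also more explicit than the paper's, which only compares values of $f_P$ restricted to the boundary, where $f_P\propto K(g(l))$, and invokes the fact that $l$ is locally minimized at the midpoints; your pointwise bound $\beta(l(x))\,d(x,p)\geq g(d(p,q)/2)$ handles nearby interior points directly. Two small remarks: in Step 1 the quantity $\beta(l_\sigma)t$ is \emph{increasing} in $t$ when $l_\sigma>\varepsilon$ (it is $f_P$, after applying the decreasing $K$, that decreases) — you use the correct conclusion in Step 2, so this is only a wording slip; and, like the paper, you verify that the listed points are local maxima but leave implicit the complementary half of the classification, namely that no other point is a mode. That half follows from the same ingredients (strict radial monotonicity away from $p$ in the interior wherever $\beta\neq 0$, and $f_P$ on the boundary being a decreasing function of $l$, so its boundary maxima are exactly the local minima of $l$, i.e.\ the Gabriel midpoints), and would be worth a sentence.
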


We refer to the Appendix for a proof and to Figure \ref{fig:mode} for an illustration. Since $\varepsilon$ depends monotonically on the hyperparameter $\alpha$, the latter controls the threshold for distances between adjacent points in $P$ below which the mode gets pushed away from such points towards the boundary of the Voronoi cells. Intuitively, $\alpha$ determines the extent by which points in $P$ are considered `isolated' (i.e., a mode) or otherwise get `merged' by placing a mode between them.

\begin{figure*}[th!]
    \centering
         \includegraphics[width=.6\linewidth]{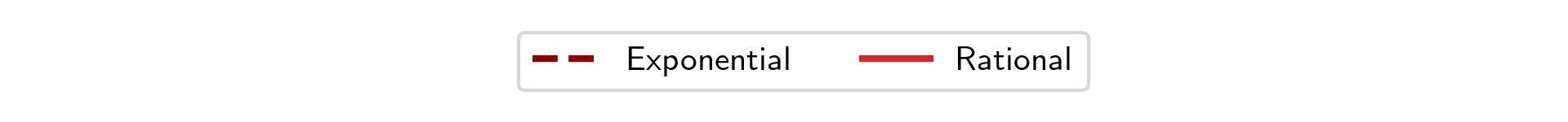}
         
    \begin{subfigure}[b]{.31\linewidth}
        \centering
        \includegraphics[width=\linewidth]{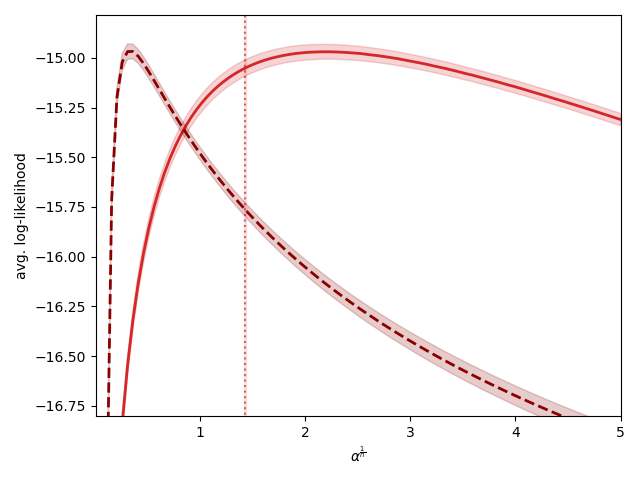}
        \subcaption*{Gaussian}
    \end{subfigure}
    \hspace{1em}
    \begin{subfigure}[b]{.31\linewidth}
        \centering
 \includegraphics[width=\linewidth]{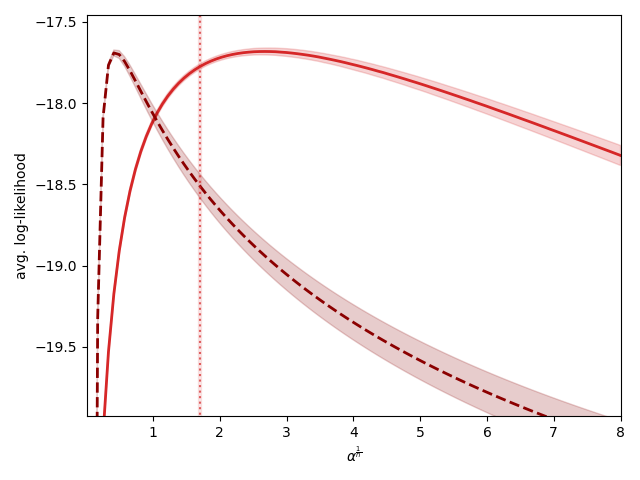}
    \subcaption*{Laplace}
    \end{subfigure}
    \hspace{1em}
    \begin{subfigure}[b]{.31\linewidth}
        \centering
        \includegraphics[width=\linewidth]{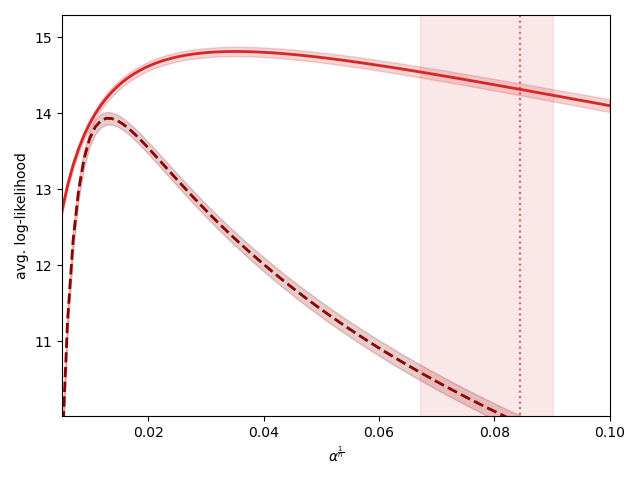}
        \subcaption*{Dirichlet}
    \end{subfigure}
    \caption{Comparison of the two kernels for RVDE (Equation \ref{eq:kernels}) on three simple distributions in $10$ dimensions.}
    \label{fig:kernels}
\end{figure*}

An alternative geometric formulation of Proposition \ref{modesprop} is the following. Consider the \textit{Gabriel graph} of $P$ (\cite{gabriel1969new}) containing an edge between $p$ and $q$ iff $\frac{p+q}{2} \in C(p) \cap C(q)$ and discard all the edges of length greater than $2 \varepsilon$. The modes of RVDE are then associated with $(1)$ all isolated vertices, $(2)$ midpoints of edges and $(3)$ whole edges of length $2\varepsilon$. Intuitively, the modes of RVDE are distributed geometrically according to the truncated Gabriel graph.

\begin{figure}[H]
\includegraphics[width=.55\linewidth]{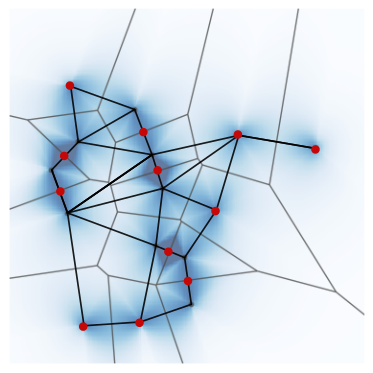} 
\centering
\caption{Illustration of the modes of RVDE (red) together with the Gabriel graph (black).}\label{fig:mode}
\end{figure}

This suggests a possible heuristic procedure for \emph{hyperparameter selection} of $\alpha$. An option is to consider statistics of lengths of edges in the Gabriel graph and choose $2 \varepsilon$ (and thus $\alpha$) as a percentile. The percentile we suggest is $\frac{|P| -1}{|E|}$ where $E$ denotes the set of edges of the Gabriel graph. The intuition is that we wish to avoid modes distributed in cycles. The number of cycles in the Gabriel graph is $|P| - |E| + 1$, from which our suggested percentile follows. This procedure enables to select $\alpha$ automatically and we evaluate it empirically in Section \ref{sec:exper}. However, it comes with a number of limitations. First, the computational complexity of such a procedure is $O(|P|^3)$ because of the construction of the Gabriel graph, which is feasible but might become expensive for large datasets. Another limitation is its independence from the kernel $K$. The selection of $\alpha$ might be satisfying for some kernels but not for others. In our empirical evaluation from Section \ref{sec:exper} we show that for the rational kernel the selected $\alpha$ is close to the optimal one in practice, while for the exponential kernel the selection is further from optimality.

\section{EXPERIMENTS}\label{sec:exper}
Our empirical investigation is organized as follows. First we study RVDE on its own by comparing the different choices of the kernel. We then compare RVDE with other non-parametric density estimators on a variety of datasets. 

\subsection{Evaluation Metrics and Baselines}\label{sec:evmetric}
We evaluate all the density estimators $f_P$ via average log-likelihood on a test set $P_{\textnormal{test}}$ i.e.,
\begin{equation}\label{eqloglik}
\frac{1}{|P_{\textnormal{test}}|}\sum_{x \in P_\textnormal{test}} \log f_P(x).
\end{equation}
This measures whether the estimator assigns high density values to points outside of $P$ but sampled from the same distribution. In order to empirically evaluate the computational complexity, we additionally include runtimes for all the experiments. Our implementations of all the considered density estimators share the same programming framework and are parallelized to a similar degree, making the raw runtimes a fair comparison. We perform experiments on a machine with an AMD Ryzen 9 5950X 16-core CPU and a GeForce RTX 3090 GPU.

We deploy the following non-parametric density estimators as baselines in the experiments. 

\textbf{Kernel Density Estimator} (KDE): given a (normalized)  kernel $K: \mathbb{R}^n \rightarrow \mathbb{R}_{\geq 0}$ the density is estimated as:
\begin{equation}
f_P(x) = \frac{1}{|P| h^n} \sum_{p \in P} K\left( \frac{x-p}{h} \right)
\end{equation}
where $h$ is the bandwidth hyperparameter. 
        
\textbf{Adaptive Kernel Density Estimator} (AdaKDE; \cite{wang2007bandwidth}): a version of KDE where the bandwidth $h_p$ depends on $p \in P$ and is smaller when data is denser around $p$. Specifically, if $f_P(p)$ denotes the standard KDE estimate with a global bandwidth $h$ then $h_p = h \lambda_p$, where:
\begin{equation}\lambda_p = (g / f_P(p))^{\frac{1}{2}}, \quad \quad g = \prod_{q \in P}f_P(q)^{\frac{1}{|P|}}.
\end{equation}

\textbf{Compactified Voronoi Density Estimator} (CVDE; \cite{pol22vor}): the VDE described at the end of Section \ref{sec:vde}. It depends on a kernel $K$ (together with a bandwidth) and is discontinuous on the boundary of Voronoi cells.


\subsection{Datasets}\label{secdsets}
In our experiments we consider data of varying nature. This includes both simple synthetic distributions and real-world datasets in high dimensions. For the latter, we consider sound data ($n=21$) and image data ($n=100$). Our datasets are the following. 

{\bf  Synthetic Datasets}: datasets generated from a number of simple densities in $n=10$  dimensions. Both $P$ and $P_\textnormal{test}$ contain $1000$ points in all the cases. The densities we consider are: a standard Gaussian distribution, a standard Laplace distribution, a Dirichlet distribution with parameters $\alpha_i = \frac{1}{n+1}$ and a mixture of two Gaussians with means $\mu_1 = (-0.5, 0 , \cdots, 0)$, $\mu_2 = (0.5, 0 , \cdots, 0) $ and standard deviations $\sigma_1 = 0.1$, $\sigma_2 = 10$ respectively. 
    
{\bf  MNIST} (\cite{deng2012mnist}): a dataset  consisting of $28 \times 28$ grayscale images of handwritten digits which are normalized in order to lie in $[0,1]^{28 \times 28}$. In order to densify the data and obtain more meaningful estimates, we downscale the images to resolution $10 \times 10$. For each experimental run, we sample half of the $60000$ training datapoints in order to evaluate the variance of the estimation. The test set size is $10000$. 
 
{\bf  Anuran Calls} (\cite{Dua:2019}): a dataset consisting of $7195$ calls from $10$ species of frogs which are represented by $21$ normalized mel-frequency cepstral coefficients in $[0,1]^{21}$. We retain $10 \%$ of data for testing and sample half of the training data at each experimental run.

\begin{figure*}[th!]
    \centering
    \includegraphics[width=.6\linewidth]{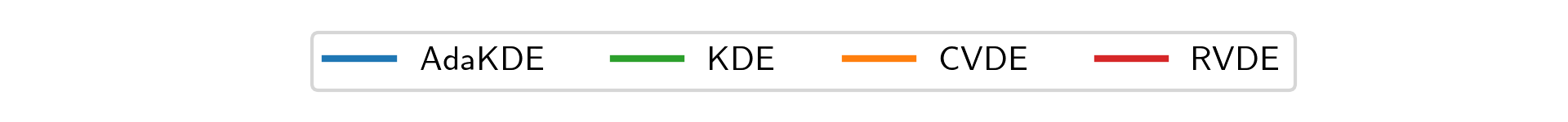}
    
    \begin{subfigure}{.35\linewidth}
    \centering
        \includegraphics[width=\linewidth]{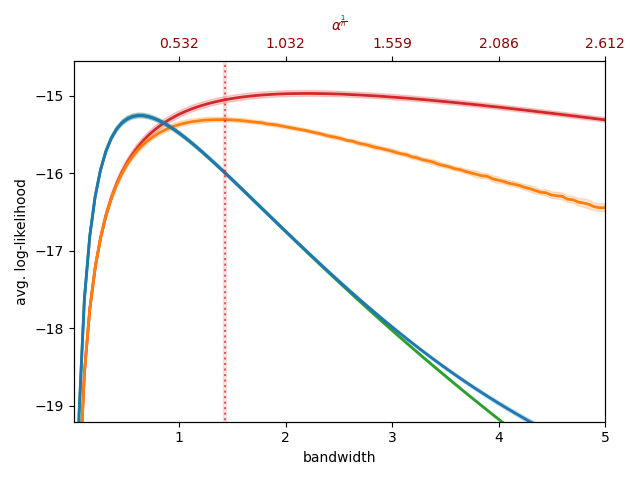}
        \subcaption*{Single Gaussian}
    \end{subfigure}
    \hspace{3em}
    \begin{subfigure}{.35\linewidth}
    \centering
        \includegraphics[width=\linewidth]{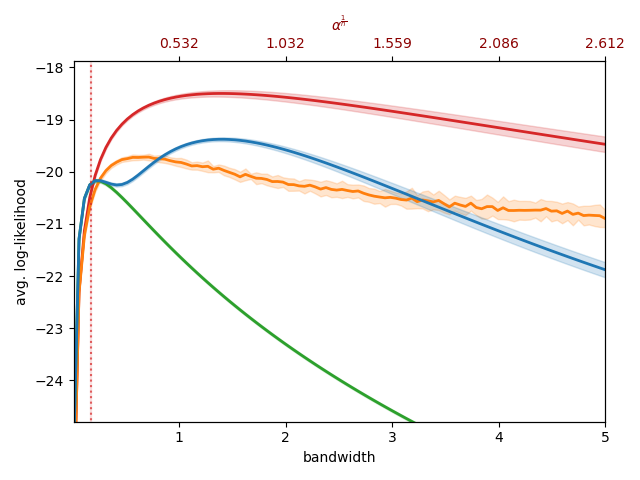}
        \subcaption*{Gaussian Mixture}
    \end{subfigure}
    \begin{subfigure}{.35\linewidth}
    \centering
        \includegraphics[width=\linewidth]{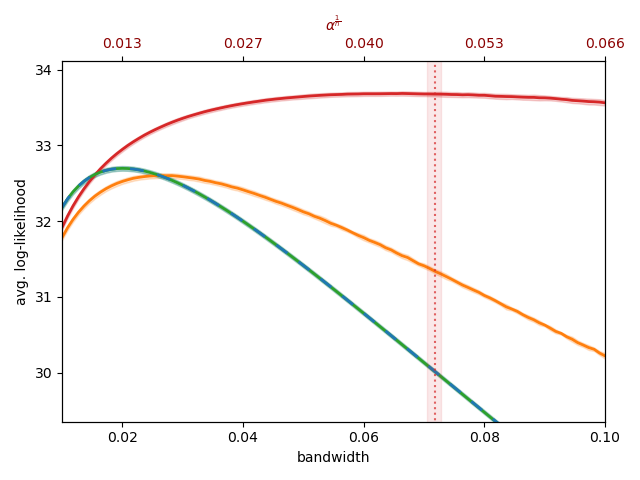}
        \subcaption*{Anuran Calls}
    \end{subfigure}
    \hspace{3em}
    \begin{subfigure}{.35\linewidth}
    \centering
        \includegraphics[width=\linewidth]{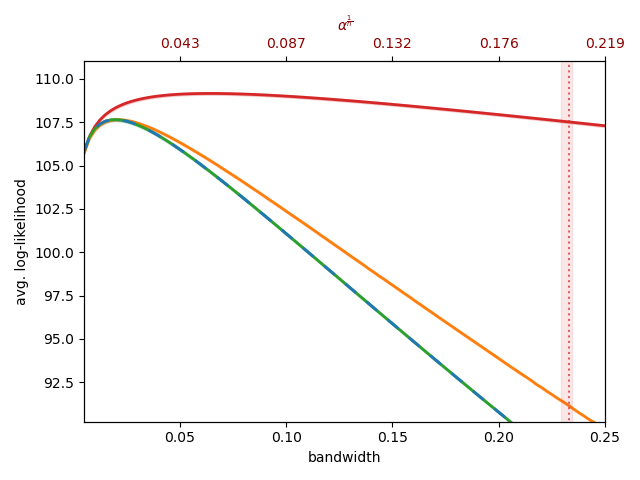}
        \subcaption*{MNIST}
    \end{subfigure}
    \caption{Comparison of the estimators as the bandwidth varies. All the estimators implement the rational kernel.}
    \label{fig:de-comparison}
\end{figure*}

\subsection{Comparison of Kernels}\label{sec:kernelcomp}
Our first experiment consists of a comparison between the rational and the exponential kernel for RVDE (Equation \ref{eq:kernels}) on the synthetic datasets. In what follows the exponent $k$ for the rational kernel is set to $k=n + 1$ for simplicity, where $n$ is the dimension of the ambient space of the dataset considered (in this experiment, $n=10$). 

The results are presented in Figure \ref{fig:kernels}. The plot displays the test log-likelihood (Equation \ref{eqloglik}) as the hyperparameter $\alpha$ varies. The latter is scaled as $\alpha^{\frac{1}{n}}$ in order to be consistent with the visualizations in the following section. The curves on the plot represent mean and standard deviation (shaded areas) over $5$ experimental runs for $100$ bandwidths. The additional vertical lines correspond to the value of the hyperparameter selection heuristic discussed at the end of Section \ref{sec:study}. As can be seen, the performance of the rational kernel is more stable w.r.t. the hyperparameter $\alpha$. The exponential kernel, however, achieves a slightly higher test score with its best hyperparameter on the Gaussian and Laplace datasets. Note that the heuristically chosen $\alpha$ aligns well with the one that achieves the best performance for the rational kernel, but is misaligned for the exponential one. We conclude that the rational kernel is generally a better option unless an extensive hyperparameter search is performed. In what follows we consequently stick to the rational kernel for RVDE.

\subsection{Comparison with Baselines}
In our main experiment we compare the performance of RVDE with the baselines described in Section \ref{sec:evmetric}. We consider the test log-likelihood (Equation \ref{sec:evmetric}), the standard deviation of the latter and the runtimes. In order to make the comparison as fair as possible, all the estimators are implemented with the rational kernel. We found out that the performances drop with the more standard Gaussian kernel (which does not apply to RVDE). We include the results with both the Gaussian kernel and the exponential kernel in the Appendix.

\begin{center}
\begin{table}[!tbh]
\setlength\extrarowheight{4pt}
\centering
\caption{Average runtimes (in seconds) per one full train-test run with fixed bandwidth. RVDE is highlighted in blue.  }
\label{tab:runtimes}
\vspace{1em}
\begin{tabular}{rcccccc}
\toprule 
 & RVDE  & CVDE & KDE & AdaKDE\\ 
\hline
\hline

Gaussian & 0.0376 & 0.265 & 0.0340 & 0.266 \\
Anuran Calls  & 0.0581 & 0.490 & 0.0787 & 0.870 \\
MNIST         & 17.4 & 408 & 12.5 & 75.0 \\
\bottomrule
\end{tabular}

\begin{tikzpicture}[overlay]
\draw[fill=RoyalBlue, opacity=.15, draw=none] (-1.8,-.2) rectangle (-.4,2.8);
\end{tikzpicture}
\end{table}
\end{center}

The plot in Figure \ref{fig:de-comparison} displays the (test) log-likelihood for all the estimators as the bandwidth hyperparameter $h$ varies (see the definition of the baselines in Section \ref{sec:evmetric}). In order to compare RVDE on the same scale as the other estimators, we convert $h$ to $\alpha$ via:
\begin{equation}
\alpha = \int_0^\infty K\left( \frac{t}{h} \right) \ \textnormal{d}t = h^n \int_0^\infty K(t) \ \textnormal{d}t.
\end{equation}
As can be seen, RVDE outperforms the baselines (each with its respective best bandwidth) in all the cases considered. The margin between RVDE and the baselines is especially evident on the more complex and high-dimensional datasets (Anuran Calls and MNIST). This confirms that the geometric benefits and the continuity properties of RVDE translate into better estimates for densities of different nature and increasing dimensionality.

Table \ref{tab:runtimes} reports the average runtime for an experimental run (with a single fixed bandwidth) for each estimator. RVDE outperforms the CVDE as well as AdaKDE by an extremely large margin. KDE achieves comparable runtimes to RVDE: it is slightly faster on Gaussian and MNIST while it is slightly slower on Anuran Calls. This confirms empirically the discussion from Section \ref{sec:compute}: RVDE is significantly more efficient than CVDE and has the same (asymptotic) complexity as KDE.  


\begin{center}
\begin{table}[!tbh]
\setlength\extrarowheight{4pt}
\centering
\caption{Standard deviations of the (test) log-likelihood over $5$ experimental runs. RVDE is highlighted in blue. Each estimator is considered with its best bandwidth.}
\label{tab:stds}
\vspace{1em}
\begin{tabular}{rcccccc}
\toprule 
 & RVDE  & CVDE & KDE & AdaKDE\\ 
\hline
\hline

Gaussian & 0.788 & 0.843 & 0.572 & 0.572 \\
Anuran Calls  & 1.170 & 1.253 & 1.152 & 1.152 \\
MNIST & 5.507 & 5.767  & 5.735 & 5.735 \\
\bottomrule
\end{tabular}

\begin{tikzpicture}[overlay]
\draw[fill=RoyalBlue, opacity=.15, draw=none] (-1.7,-.2) rectangle (-.3,2.8);
\end{tikzpicture}
\end{table}
\end{center}

Table \ref{tab:stds} separately reports the standard deviation of the log-likelihood (averaged over $P_\textnormal{test}$) w.r.t. the dataset sampling. For each estimator, we consider its best bandwidth according to the results from Figure \ref{fig:de-comparison}. We first observe that RVDE achieves lower standard deviation than CVDE on all the datasets. This corroborates the hypothesis that the continuity of RVDE results in more stable estimates than those obtained by the highly-discontinuous CVDE. KDE and AdaKDE achieve the lowest standard deviations on the Gaussian and Anuran Calls datasets. This is likely due to the smoothness of such estimators and again confirms the benefit of regularity biases in terms of stability. However, on the most complex dataset considered (MNIST) RVDE outperforms the baselines. This suggests that for articulated densities the biases of geometric nature become more beneficial than generic biases such as smoothness.

\section{CONCLUSIONS AND FUTURE WORK}
In this work we introduced a non-parametric density estimator (RVDE) benefiting from the geometric properties of Voronoi tessellations while being continuous and computationally efficient. We provided both theoretical and empirical investigations of RVDE. 

An interesting line for future investigation is to explore the radial construction of RVDE on Riemannian manifolds beyond the Euclidean space. In this generality, the rays correspond to geodesics defined via the exponential map of the given manifold. A variety of Riemannian manifolds arise in statistics and machine learning. For example, data on spheres are the object of study of directional statistics (\cite{directional}), hyperbolic spaces are routinely deployed to represent hierarchical data (\cite{nickel2017poincare}) and complex projective spaces correspond to Kendall shape spaces from computer vision (\cite{klingenberg2020walking}). Those areas of research can potentially benefit from the geometric characteristics and the computational efficiency of an extension of RVDE to Riemannian manifolds.

\section*{Acknowledgements}
This work was supported by the Swedish Research Council, Knut and Alice
Wallenberg Foundation and the European Research Council (ERC-BIRD-884807).

\printbibliography


\onecolumn 
\aistatstitle{An Efficient and Continuous Voronoi Density Estimator: \\ Supplementary Materials}

\section*{PROOFS OF RESULTS FROM SECTION \ref{sec:study}}
\begin{prop}
Fix $l>0$ and suppose $K \in C^1(\mathbb{R}_{>A})$. Then the iteration $\beta_{m+1}$ of the Newton-Raphson method for computing $\beta(l)$ by solving Equation \ref{eqbeta} takes form: 
\begin{equation}\label{eqnewtonappend}
\beta_{m+1} = \beta_m\left( 1 + \frac{1}{n} \left(  1 - \frac{l^nK(\beta_m l) - n\alpha}{l^nK(\beta_m l)  - n \int_0^l t^{n-1}K(\beta_mt) \ \textnormal{d}t}  \right) \right). 
\end{equation}
Moreover, if $K$ is convex then the Newton-Raphson method converges for any initial value $\beta_0$ i.e., $\lim_{m \to \infty}\beta_m = \beta(l)$. 
\end{prop}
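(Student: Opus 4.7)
The plan is to view $\beta(l)$ as the unique root on $(A/l, +\infty)$ of
\begin{equation*}
F(\beta) := \int_0^l t^{n-1} K(\beta t)\,\textnormal{d}t - \alpha,
\end{equation*}
carry out the Newton step by direct computation, and then use convexity of $F$ inherited from convexity of $K$ to obtain global convergence. For the explicit formula I would differentiate under the integral sign to get $F'(\beta) = \int_0^l t^n K'(\beta t)\,\textnormal{d}t$, and integrate by parts with $u = t^n$ and $dv = K'(\beta t)\,\textnormal{d}t$ (so $v = K(\beta t)/\beta$) to arrive at
\begin{equation*}
F'(\beta) = \frac{1}{\beta}\Bigl(l^n K(\beta l) - n\int_0^l t^{n-1} K(\beta t)\,\textnormal{d}t\Bigr).
\end{equation*}
Plugging $F(\beta_m)$ and $F'(\beta_m)$ into $\beta_{m+1} = \beta_m - F(\beta_m)/F'(\beta_m)$, factoring $\beta_m$ out of the expression, and rewriting the resulting fraction by adding and subtracting $n\alpha$ in the numerator produces precisely Equation \ref{eqnewtonappend}; this step is pure algebra and should not be an obstacle.

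For the convergence claim under convexity of $K$, I would first observe that for each fixed $t > 0$ the map $\beta \mapsto K(\beta t)$ is convex as a composition of convex $K$ with a linear function, and integrating against the nonnegative weight $t^{n-1}$ preserves convexity, so $F$ is convex on $(A/l, +\infty)$ without any need for $K$ to be $C^2$. Combined with strict monotonicity $F' < 0$ (from $K$ strictly decreasing) and the boundary limits $F(\beta) \to +\infty$ as $\beta \to (A/l)^+$ and $F(\beta) \to -\alpha$ as $\beta \to +\infty$ (guaranteed by the integrability hypotheses), we are in the classical Newton-for-convex-monotone setting. The one-line key step is the convexity inequality at $\beta_m$ evaluated at $\beta_{m+1}$:
\begin{equation*}
0 \;=\; F(\beta_m) + F'(\beta_m)(\beta_{m+1} - \beta_m) \;\leq\; F(\beta_{m+1}),
\end{equation*}
which, combined with $F$ decreasing, yields $\beta_{m+1} \leq \beta^* := \beta(l)$ for every $m$. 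From $\beta_m \leq \beta^*$ the update gives $\beta_{m+1} \geq \beta_m$ since $F(\beta_m) \geq 0$ and $F'(\beta_m) < 0$, so $(\beta_m)_{m \geq 1}$ is non-decreasing and bounded above by $\beta^*$; its limit must solve $F = 0$ by continuity and hence equal $\beta^*$.

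The main obstacle I expect is verifying that the iterates actually remain inside the domain $(A/l, +\infty)$ so that $F(\beta_m)$ is defined at every step. For $\beta_0 \leq \beta^*$ the monotonicity argument confines the sequence to $[\beta_0, \beta^*]$, which is safe. For $\beta_0 > \beta^*$, however, the Newton step could a priori overshoot past the singular boundary $A/l$, and ruling this out requires quantitative control on the blow-up rate of $F$ near $A/l$ coming from the non-integrability hypothesis on $t^{n-1}K(t)$ around $A$, together with convexity of $K$. This quantitative bound — or alternatively a reinterpretation of the statement as local convergence in a neighborhood of $\beta^*$ — is the genuinely delicate part of the argument; everything else reduces to the convex-monotone template outlined above.
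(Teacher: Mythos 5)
Your proposal follows the paper's proof essentially verbatim: the paper defines the same $F(\beta)=\int_0^l t^{n-1}K(\beta t)\,\textnormal{d}t-\alpha$, obtains $F'$ by the same integration by parts, states that the explicit iteration then follows by elementary algebra, and for convergence merely observes that convexity of $K$ implies convexity of $F$ and cites the standard convergence of Newton's method for convex functions. The one place you go beyond the paper is in spelling out the monotone bracketing argument and in flagging that an iterate started at large $\beta_0>\beta(l)$ could overshoot past the singular boundary $A/l$ when $A$ is finite (e.g.\ the rational kernel, where $F'\to 0$ and $F\to-\alpha$ at $+\infty$ make the first step arbitrarily long); that concern is genuine and is precisely what the paper's bare citation glosses over, so your identification of it as the delicate step is, if anything, more careful than the published argument.
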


\begin{proof}
Consider 
\begin{equation}
    F(\beta) = \int_0^l t^{n-1}K(\beta t) \ \textnormal{d}t - \alpha.
\end{equation}
The iteration of the Newton-Rhapson method for solving $F(\beta) = 0$ takes form: 
\begin{equation}\label{eqappendraw}
    \beta_{m+1}  = \beta_m - \frac{F(\beta_m)}{\frac{\textnormal{d}F}{\textnormal{d}\beta}(\beta_m)}.
\end{equation}
Via integration by parts we obtain: 
\begin{equation}
\frac{\textnormal{d}F}{\textnormal{d}\beta}(\beta) = \int_0^l t^n \frac{\textnormal{d}K}{\textnormal{d}t}(\beta t) \ \textnormal{d}t = \frac{1}{\beta} \left(  l^n K(\beta l) -n  \int_0^l t^{n-1}K(\beta t) \ \textnormal{d}t  \right). 
\end{equation}
Equation \ref{eqnewtonappend} follows then from Equation \ref{eqappendraw} by elementary algebraic manipulations. The convergence guarantee follows from the fact that if $K$ is convex then $F$ is easily seen to be convex as well. The Newton-Raphson method is well-known to be convergent for convex functions (\cite{boyd2004convex}).   
\end{proof}

\begin{prop}\label{diffeqappendix}
The function $\beta: \ \mathbb{R}_{>0} \rightarrow \mathbb{R}$ is increasing, has a zero at $l= (n \alpha)^{\frac{1}{n}}$ and has an horizontal asymptote: 
\begin{equation}
\lim_{l \to + \infty} \beta(l) = \left( \frac{1}{\alpha} \int_0^\infty t^{n-1}K(t) \ \textnormal{d}t \right)^{\frac{1}{n}}. 
\end{equation}
Moreover if $K \in C^1(\mathbb{R}_{>A})$ then $\beta \in C^1(\mathbb{R}_{>0})$ and it satisfies the differential equation:
\begin{equation}
 \left( l- \frac{n \alpha}{l^{n-1}K(\beta(l)l)} \right) \frac{\textnormal{d}\beta}{\textnormal{d}l}(l) = - \beta(l). 
\end{equation}
\end{prop}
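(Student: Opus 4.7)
The plan is to work throughout with the implicit equation
\[
F(l,\beta) := \int_0^l t^{n-1} K(\beta t)\, dt = \alpha,
\]
from which all four claims follow. The zero is immediate: substituting $\beta = 0$ and using $K(0)=1$ gives $l^n/n = \alpha$, so $\beta((n\alpha)^{1/n}) = 0$ by uniqueness of $\beta(l)$. For $C^1$ regularity and monotonicity I invoke the implicit function theorem. Differentiation under the integral yields
\[
\partial_l F = l^{n-1} K(\beta l) > 0, \qquad \partial_\beta F = \int_0^l t^n K'(\beta t)\, dt < 0,
\]
the latter being strictly negative because $K$ is strictly decreasing, so $K' \le 0$ with strict inequality on a set of positive measure. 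Both partials are continuous in $(l,\beta)$ under $K \in C^1$, whence $\beta \in C^1(\mathbb{R}_{>0})$ and $\beta'(l) = -\partial_l F / \partial_\beta F > 0$.

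For the ODE I would simplify $\partial_\beta F$ via integration by parts, taking $u = t^n$ and $dv = K'(\beta t)\, dt$ with antiderivative $v = K(\beta t)/\beta$ (valid for $\beta \neq 0$). This gives
\[
\partial_\beta F = \frac{l^n K(\beta l)}{\beta} - \frac{n}{\beta}\int_0^l t^{n-1} K(\beta t)\, dt = \frac{l^n K(\beta l) - n\alpha}{\beta},
\]
the last equality using the defining implicit equation. Substituting into $\beta'(l) = -\partial_l F/\partial_\beta F$ and dividing through by $l^{n-1} K(\beta l)$ delivers the stated ODE. At the unique zero $l = (n\alpha)^{1/n}$ the coefficient in front of $\beta'(l)$ reduces to $l - n\alpha/l^{n-1} = 0$ while the right-hand side is $-\beta = 0$, so the ODE extends continuously across that point.

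For the asymptote I would apply the substitution $u = \beta(l)\, t$ (valid wherever $\beta(l) > 0$, which by monotonicity holds for $l > (n\alpha)^{1/n}$) to recast the implicit equation as
\[
G\bigl(\beta(l)\, l\bigr) = \alpha\, \beta(l)^n, \qquad G(s) := \int_0^s u^{n-1} K(u)\, du.
\]
Since $G$ is bounded by the finite limit $G_\infty := \int_0^\infty u^{n-1} K(u)\, du$, this yields the a priori bound $\beta(l) \leq (G_\infty/\alpha)^{1/n}$; combined with monotonicity, $\beta$ admits a finite limit $\bar\beta > 0$. Since $\beta(l)\, l \to \infty$, letting $l \to \infty$ in the reformulated equation gives $G_\infty = \alpha\, \bar\beta^n$, which is the claimed expression. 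The main obstacle is precisely this final step: without an a priori bound on $\beta$ there would be nothing preventing the monotone limit from diverging, and one cannot read the asymptotic behavior off the original implicit equation because $\beta$ enters inside the integrand. The substitution trick resolves this by decoupling $l$ and $\beta$ inside the integral, producing the bounded right-hand side needed to close the argument.
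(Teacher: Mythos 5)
Your proof is correct and follows essentially the same route as the paper's: the same substitution $\beta=0$ for the zero, the same change of variables for the asymptote, and the same integration by parts yielding $\partial_\beta F = \bigl(l^nK(\beta l)-n\alpha\bigr)/\beta$ for the ODE; you merely package the differentiation step as an application of the implicit function theorem and add an a priori bound plus monotone convergence to justify that the limit at infinity exists, both of which the paper leaves implicit. The one caveat is that your monotonicity argument via $\beta'(l)>0$ presupposes $K\in C^1$, whereas the proposition asserts monotonicity for merely continuous $K$ — the paper's direct comparison (if $l_1<l_2$ but $\beta(l_1)\ge\beta(l_2)$, then $\alpha=\int_0^{l_2}t^{n-1}K(\beta(l_2)t)\,\textnormal{d}t\ge\int_0^{l_2}t^{n-1}K(\beta(l_1)t)\,\textnormal{d}t>\int_0^{l_1}t^{n-1}K(\beta(l_1)t)\,\textnormal{d}t=\alpha$, a contradiction) covers that case.
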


\begin{proof}
 The claim on the monotonicity of $\beta$ follows directly from its definition (Equation \ref{eqbeta}) and the hypothesis that $K$ is decreasing. In order to compute its zero, note that $\beta(l)=0$ implies $\alpha = \int_0^l K(0)t^{n-1} = \frac{l^n}{n}$ and thus $l = (n \alpha)^{\frac{1}{n}}$. For the asymptote note that for $l = + \infty$ Equation \ref{eqbeta} becomes  by a change of variables: 
\begin{equation}
\int_0^\infty t^{n-1}K(\beta(+\infty)t) \ \textnormal{d}t = \frac{1}{\beta(+\infty)^n} \int_0^\infty t^{n-1}K(t) \ \textnormal{d}t = \alpha.
\end{equation}

 Lastly, in order to obtain the differential equation for $\beta$ we differentiate Equation \ref{eqbeta} on both sides and get: 
  \newpage 

 \begin{equation}
  \begin{split}
 0 = \frac{\textnormal{d}}{\textnormal{d}l} \int_0^l t^{n-1}K(\beta(l)t) \ \textnormal{d}t & = l^{n-1}K(\beta(l)l) + \int_0^l t^{n-1}\frac{\textnormal{d}}{\textnormal{d}l}  K(\beta(l)t) \ \textnormal{d}t  \\
 &=l^{n-1}K(\beta(l)l) + \frac{\textnormal{d}\beta}{\textnormal{d}l}(l) \int_0^l t^{n}\frac{\textnormal{d}K}{\textnormal{d}t} (\beta(l)t) \ \textnormal{d}t  \\
 & = l^{n-1}K(\beta(l)l) + \frac{\textnormal{d}\beta}{\textnormal{d}l}(l) \frac{l^{n}K(\beta(l)l) - n\alpha}{\beta(l)} 
 \end{split}
 \end{equation}
 where in the first identity we deployed the (distributional) Leibniz rule while in the last one we deployed integration by parts. 

\end{proof}

\begin{prop}
The modes of $f_P$ are as follows:  
\begin{enumerate}[label=(\arabic*)]
    \item $p \in P$ if $d(p, q)> 2 \varepsilon$ for every Voronoi cell $C(q)$ adjacent to $C(p)$,
    \item $\frac{p + q}{2}$ for $p, q \in P$ if $\frac{p + q}{2} \in C(p) \cap C(q)$ and $d(p,q) < 2 \varepsilon$,
    \item all points belonging to the segment $[p,q]$ for $p, q \in P$ if $\frac{p + q}{2} \in C(p) \cap C(q)$ and $d(p,q) = 2\varepsilon$. 
\end{enumerate}
\end{prop}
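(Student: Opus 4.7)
The strategy is to combine the sign information on $\beta$ supplied by Proposition~\ref{diffeq} with the conic structure of $f_P$. First, since $l$ is constant along any ray from a site $p \in P$, parametrizing points of $C(p)$ as $p + t\sigma$ with $\sigma \in \mathbb{S}^{n-1}$ and $t \in (0, L(\sigma)]$, where $L(\sigma) := l(p + \sigma)$, gives $f_P(p + t\sigma) = K(\beta(L(\sigma))\,t)/c$ with $c = \alpha |P|\,\textnormal{Vol}(\mathbb{S}^{n-1})$. Because $K$ is strictly decreasing and $\textnormal{sign}(\beta(L)) = \textnormal{sign}(L - \varepsilon)$ by Proposition~\ref{diffeq}, the map $t \mapsto f_P(p + t\sigma)$ is strictly monotone (decreasing, constant, or increasing) according to whether $L(\sigma)$ exceeds, equals, or falls below $\varepsilon$. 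From Equation~\ref{eq:lqz}, minimizing $d(p,q)^2/(2\langle \sigma, q-p\rangle)$ at fixed $q$ (attained at $\sigma = (q-p)/d(p,q)$) yields the geometric identity $\min_\sigma L(\sigma) = \tfrac{1}{2}\min_{q \in P\setminus\{p\}} d(p,q)$.

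The second preparatory step is to show that the boundary profile $g(r) := K(\beta(r)\,r)$ is strictly decreasing on $(0,+\infty)$ with $g(\varepsilon) = 1$. I would obtain this by differentiating $g$, eliminating $\beta'$ using the ODE of Proposition~\ref{diffeq} (which itself comes from integration by parts in Equation~\ref{eqbeta}), and checking the sign of $g'$ on each side of $\varepsilon$. The key input is that $l^n K(\beta(l) l) - n\alpha$ has the opposite sign of $\beta(l)$, which follows by comparing $K(\beta(l) l)$ with $K(\beta(l) t)$ in the integrand of Equation~\ref{eqbeta}.

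With these tools the three cases fall into place. In case~$(1)$, the hypothesis $d(p,q) > 2\varepsilon$ for every Voronoi neighbor $q$ gives $L(\sigma) > \varepsilon$ for every $\sigma$, so $f_P$ strictly decreases along every ray from $p$ inside $C(p)$, making $p$ a strict local maximum. In case~$(2)$, I work symmetrically on both sides of the shared face $C(p) \cap C(q)$. On the face itself, using $l(x) = d(x,p)$ at boundary points, $f_P = g(d(\cdot, p))/c$; since $d(\cdot, p)$ restricted to the bisecting hyperplane is minimized at $m = (p+q)/2$ with value $d(p,q)/2 < \varepsilon$, the decreasing $g$ makes $m$ a face-local maximum. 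For $x$ slightly inside $C(p)$ near $m$, the ray from $p$ through $x$ exits at a point $x' \in C(p) \cap C(q)$ near $m$ (using that the Gabriel hypothesis places $m$ in the relative interior of the face); along the ray $f_P$ increases since $L < \varepsilon$, so $f_P(x) \leq f_P(x') \leq f_P(m)$, and the $C(q)$ side is identical.

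Case~$(3)$ is the boundary regime: $\beta(\varepsilon) = 0$ forces $f_P \equiv 1/c$ along the segment $[p,q]$, and the local comparison used in case~$(2)$, now with $L(\sigma) \geq \varepsilon$ in directions near $(q-p)/d(p,q)$, yields $f_P \leq 1/c$ nearby, so each point of the segment is a (non-strict) local maximum. I expect the main technical obstacle to be the strict monotonicity of $g$: both numerator and denominator in the expression for $g'$ vanish at $l = \varepsilon$, so the sign analysis has to be carried out on each side of $\varepsilon$ separately and matched, rather than read off from a single explicit formula.
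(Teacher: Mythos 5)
Your argument follows the same route as the paper's proof: radial monotonicity of $f_P$ governed by the sign of $\beta(l(x))$ (hence of $l(x)-\varepsilon$), monotonicity of the boundary profile $K(\beta(l)\,l)$ derived from the differential equation of Proposition~\ref{diffeq}, and the observation that $l$ (equivalently $d(\cdot,p)$ on the shared face) is locally minimized at the midpoint, with the $\beta=0$ case giving the constant-density segment. The additional details you supply --- the sign of $n\alpha - l^{n}K(\beta(l)l)$ via comparison inside the integral, and the two-sided ray comparison near the midpoint --- merely make explicit steps the paper leaves implicit, so this is essentially the same proof.
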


\begin{proof}
 Pick $p\in P$. If $p$ satisfies the hypothesis of the first claim then $l(x) > \varepsilon$ for every $x\in C(p)$ and thus $\beta(l(x)) > 0$ by Proposition \ref{diffeqappendix}. Since $K$ is decreasing, $f_P$ decreases radially w.r.t. $p$ in $C(p)$ and the first claim follows. If $p$ does not satisfy the hypothesis of the first claim then $\beta(l(x)) \leq 0$ for some $x \in C(p)$. With the exception of the case $\beta(l)=0$, the modes lie then on the boundary and are of the form $K(\beta(l)l)$ up to a multiplicative constant. The function $\beta(l)l$ is increasing in $l$ since  by appealing to Proposition \ref{diffeqappendix} we can compute its derivative: 
 \begin{equation}
  \frac{\textnormal{d} \beta(l)l}{\textnormal{d}l} = \beta(l) + l   \frac{\textnormal{d} \beta(l)}{\textnormal{d}l}  = \beta(l) \frac{n\alpha}{n\alpha - l^n K(\beta(l)l)} \geq 0.
 \end{equation}

Since $l(x)$ has local minima at midpoints of segments connecting points in $P$, $K(\beta(l)l)$ is locally maximized therein and the second claim follows. In the hypothesis of the third claim $\beta$ vanishes on the segment and the density is thus constant. 

\end{proof}

\section*{ADDITIONAL EXPERIMENTS}
In this section we report additional experimental results complementing the ones in the main of the manuscript. For completeness, we evaluate the density estimators on different kernels. Figure \ref{fig:de-comparison-e} displays comparative results for all the estimators with the exponential and the Gaussian kernel (note that the latter does not apply to RVDE). Moreover, we experiment with different dimensions and evaluation metrics other than average log-likelihood. This is possible only on a synthetic dataset where the dimension can vary and where the ground-truth density $\rho$ is known. The latter is necessary for the metric considered. Figure \ref{fig:hellinger} displays a comparison on a high-dimensional Gaussian mixture ($n=30$) as well as a comparison on the Gaussian mixture as in Section \ref{sec:exper} ($n=10$) where the evaluation metric is the empirical \emph{Hellinger distance} on the test set:

\begin{equation}
\frac{1}{2 |P_{\rm test}|} \sum_{x \in P_{\rm test}}\left( f_P(x)^{\frac{1}{2}} - {\rho}(x)^{\frac{1}{2}} \right)^2.
\end{equation}

\begin{figure}[!th]
    \centering
    \includegraphics[width=.6\linewidth]{images/de_comparison/vskde_labels.png}
    
    \begin{subfigure}{.225\linewidth}
        \centering
        \includegraphics[width=\linewidth]{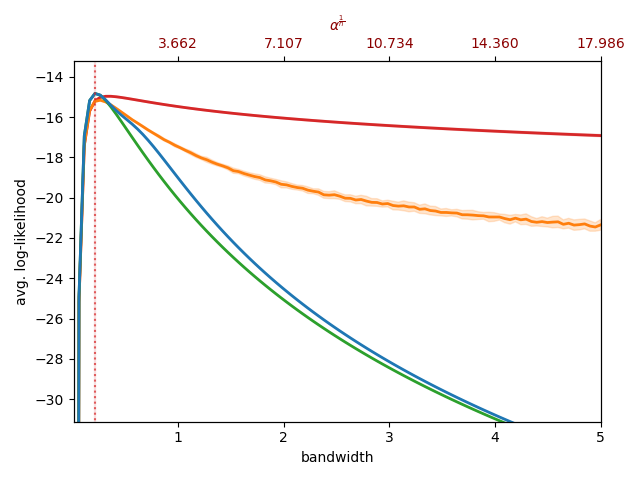}
    \end{subfigure}
    \begin{subfigure}{.225\linewidth}
        \centering
        \includegraphics[width=\linewidth]{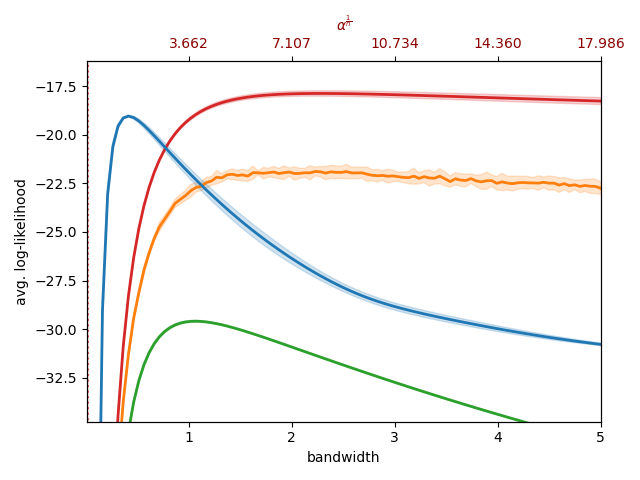}
    \end{subfigure}
    \begin{subfigure}{.225\linewidth}
        \centering
        \includegraphics[width=\linewidth]{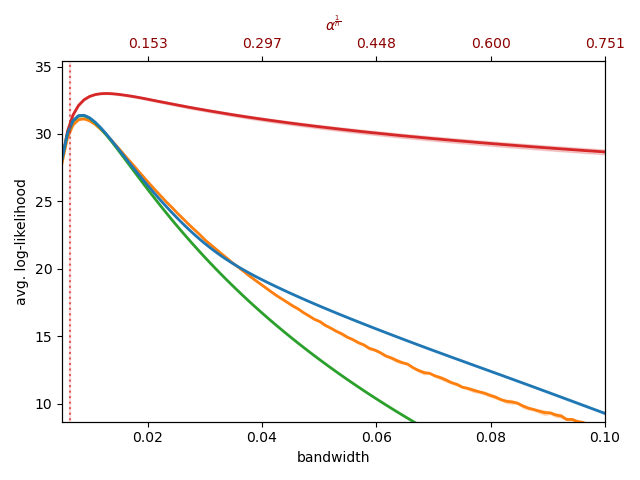}
    \end{subfigure}
    \begin{subfigure}{.225\linewidth}
        \centering
        \includegraphics[width=\linewidth]{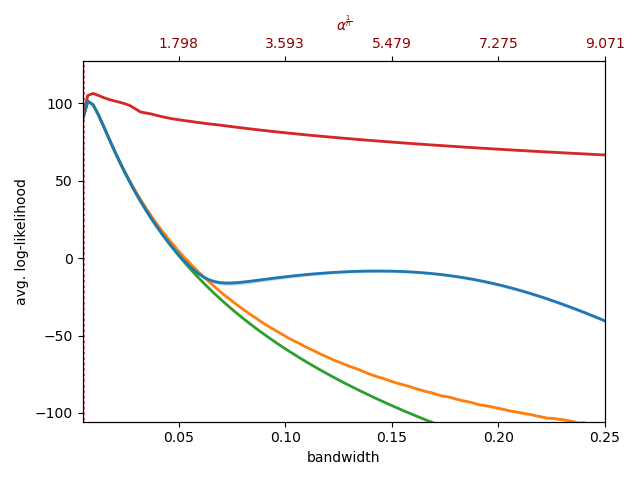}
    \end{subfigure}

    \begin{subfigure}{.225\linewidth}
        \centering
        \includegraphics[width=\linewidth]{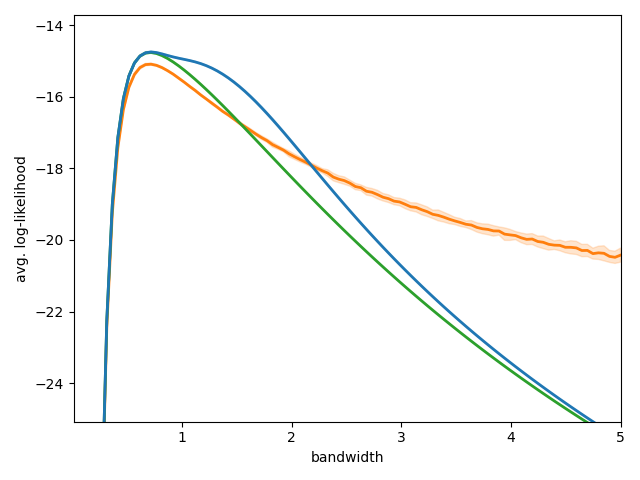}
        \subcaption*{Single Gaussian}
    \end{subfigure}
    \begin{subfigure}{.225\linewidth}
        \centering
        \includegraphics[width=\linewidth]{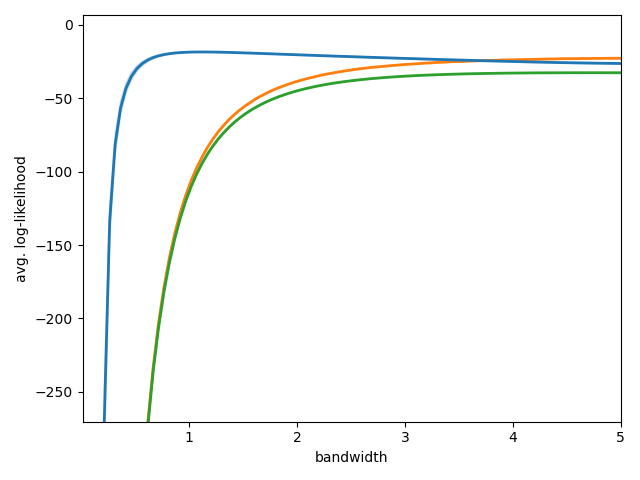}
        \subcaption*{Gaussian Mixture}
    \end{subfigure}
    \begin{subfigure}{.225\linewidth}
        \centering
        \includegraphics[width=\linewidth]{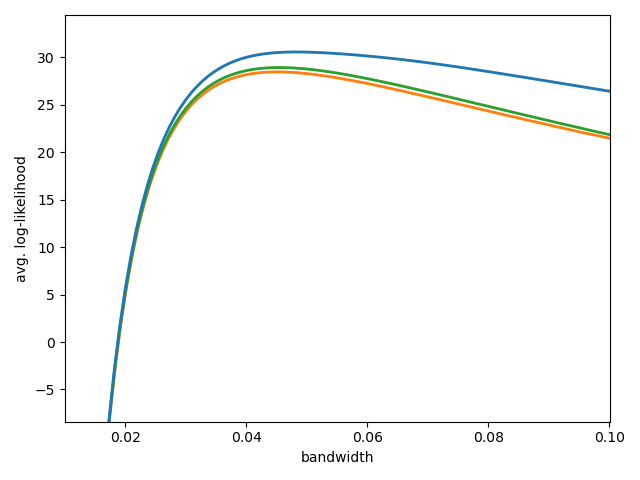}
        \subcaption*{Anuran Calls}
    \end{subfigure}
    \begin{subfigure}{.225\linewidth}
        \centering
        \includegraphics[width=\linewidth]{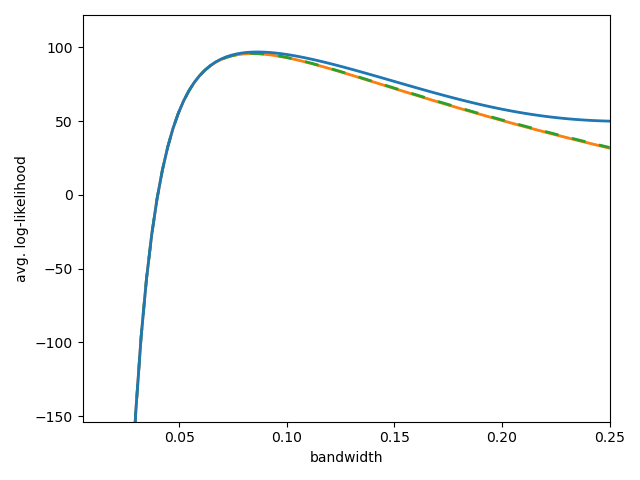}
        \subcaption*{MNIST}
    \end{subfigure}
    \begin{picture}(0,0)
    \put(-463,72){\rotatebox{90}{\small Exponential}}
    \put(-463,-3){\rotatebox{90}{\small Gaussian}}
\end{picture}
    \caption{Comparison of the estimators with the exponential and Gaussian kernel as the bandwidth varies.}
    \label{fig:de-comparison-e}

\vspace{5em}
    \begin{subfigure}{.3\linewidth}
    \centering
        \includegraphics[width=\linewidth]{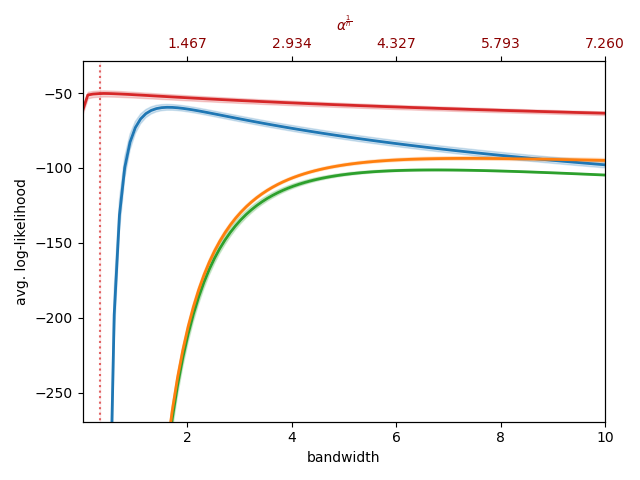}
        \subcaption*{$n=30$}
    \end{subfigure}
    \hspace{3em}
    \begin{subfigure}{.3\linewidth}
    \centering
        \includegraphics[width=\linewidth]{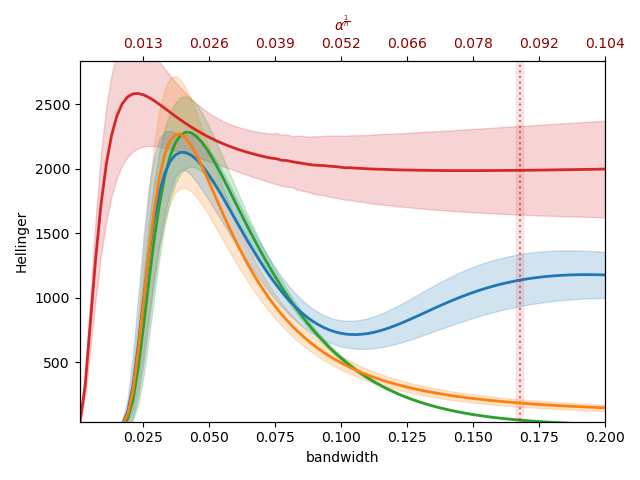}
        \subcaption*{Hellinger Distance}
    \end{subfigure}
    \caption{Comparison of the estimators on a $30$-dimensional Gaussian mixture (left) and on a $10$-dimensional Gaussian mixture with the Hellinger distance as a metric (right).}
    \label{fig:hellinger}
    \vspace{15em}
\end{figure}

\end{document}